\newtheorem{theorem}{Theorem}
\newtheorem{remark}{Remark}
\algrenewcommand\alglinenumber[1]{\scriptsize #1}
\let\OldStatex\Statex
\renewcommand{\Statex}[1][3]{%
  \setlength\@tempdima{\algorithmicindent}%
  \OldStatex\hskip\dimexpr#1\@tempdima\relax}
\newtheorem{thm}{Theorem}
\newtheorem{cor}[theorem]{Corollary}
\newtheorem{defn}[theorem]{Definition}
\newtheorem{proposition}[theorem]{Proposition}
\def\SNR{{\textsf{SNR}}}
\def\SINR{{\mathsf{SINR}}}
\def\bb0{{\mathbb{0}}}
\def\bb{{\mathbf{b}}}
\def\b0{{\mathbf{0}}}
\def\bB{{\mathbf{B}}}
\def\b1{{\mathbf{1}}}
\def\bbE{{\mathbb{E}}}
\def\bbP{{\mathbb{P}}}
\def\bbR{{\mathbb{R}}}
\def\cF{\mathcal{F}}
\def\cG{\mathcal{G}}
\def\cI{\mathcal{I}}
\def\cS{\mathcal{S}}
\def\cU{\mathcal{U}}
\def\sfN{\mathsf{N}}
\def\sfR{\mathsf{R}}
\def\sfm{{\mathsf{m}}}
\def\sfn{{\mathsf{n}}}
\def\sfu{{\mathsf{u}}}
\def\sf0{{\mathsf{0}}}
\def\nn{\nonumber}
\newcommand{\ep}{\epsilon}
\newcommand{\al}{\alpha}
\def\ga{\gamma}
\def\1{\mathbf{1}}
\def\f{\frac}
\title{Capacity of Cellular Wireless Network}
\author{\IEEEauthorblockN{Rahul Vaze}
\IEEEauthorblockA{School of Technology and Computer Science \\ TIFR Mumbai, India \\
vaze@tcs.tifr.res.in}\thanks{The research was supported by Young Scientist Award grant from the Indian National Science Academy to Rahul Vaze.}\thanks{This paper in part will appear in Proc. WiOpt 2017, May 15-19, 2017, Paris.}
\and
\IEEEauthorblockN{Srikanth K. Iyer}
\IEEEauthorblockA{Department of Mathematics\\Indian Institute of Science, Bangalore,\\
skiyer@math.iisc.ernet.in}

}
\begin{document}
\maketitle
\thispagestyle{empty}
\pagestyle{empty}
\begin{abstract}
Earlier definitions of capacity for wireless networks, e.g., transport or transmission capacity, for which exact theoretical results are known, are well suited for ad hoc networks but are not directly applicable for cellular wireless networks, where large-scale basestation (BS) coordination is not possible, and retransmissions/ARQ under the SINR model is a universal feature.

In this paper, cellular wireless networks, where both BS locations and mobile user (MU) locations are distributed as independent Poisson point processes are considered, and each MU connects to its nearest BS.
With ARQ, under the SINR model, the effective downlink rate of packet transmission is the reciprocal of the expected delay (number of retransmissions needed till success), which we use as our network capacity definition after scaling it with the BS density.

Exact characterization of this natural capacity metric for cellular wireless networks is derived.
The capacity is shown to first increase polynomially with the BS density in the low BS density regime and then scale inverse exponentially with the increasing BS density. Two distinct upper bounds are derived that are relevant for the low and the high BS density regimes.  A single power control strategy is shown to achieve the upper bounds in both the regimes.
This result is fundamentally different from the well known capacity results for ad hoc networks, such as transport and transmission capacity that scale as the square root of the (high) BS density. Our results show that the strong temporal correlations of SINRs with PPP distributed BS locations is limiting, and the 
realizable capacity in cellular wireless networks in high-BS density regime is much smaller than previously thought. 
A byproduct of our analysis shows that the capacity of the ALOHA strategy with retransmissions is zero. \end{abstract}

\begin{IEEEkeywords}Capacity, Cellular Wireless Networks, Poisson point process, ARQ.
\end{IEEEkeywords}

\section{Introduction}
Finding the Shannon capacity of a wireless network is perhaps one of the most well-studied problem that has remained unsolved. 
For ad hoc wireless networks, two slightly relaxed capacity notions have been defined, transport \cite{Gupta2000} and transmission \cite{Weber2005}, for which  theoretical results have been possible mainly because of two important simplifications; SINR model of communication, and assuming random locations for nodes \cite{bookvaze2015}.

Under the SINR model, communication between two nodes is deemed successful if the SINR between them is larger than a threshold that depends on the rate of transmission. The assumption of random location of nodes takes two forms, either nodes are assumed to be distributed uniformly on a unit-radius disc or nodes are assumed to be located on the whole of $\bbR^2$ as a homogenous 
Poisson point process (PPP) with a fixed density. 

Even under these simplifications, as far as we know, there has been no fundamental characterization of the maximum throughput possible (under any reasonable capacity definition) in {\it cellular} wireless networks that are structured, rather than being ad hoc. 
Two important features of cellular wireless networks that may have been behind the lack of capacity results are: preclusion of large scale basestation (BS) coordination/scheduling (allowed in transport capacity) and universal implementation of automatic repeat request (ARQ) protocol, where packets are transmitted repeatedly until successful reception via acks/nacks. Retransmissions entail incorporating temporal correlations of SINRs, thereby making the analysis challenging.

In this paper, we consider the well-accepted model of a cellular wireless network, also called the tractable model \cite{andrews2011tractable}, that has BS locations process distributed as a homogenous PPP, and the main of focus is on characterizing the 'capacity' of the cellular network in the downlink. The tractable model is a reasonable abstraction in the modern scenario, where multiple layers of BSs (macro, micro, femto) are overlaid over each other. Moreover, we consider the widely used BS-MU (mobile user) association rule, where each MU connects to its nearest BS, i.e., all MUs lying in a Voronoi cell connect to the representative Voronoi BS. The MU locations are also assumed to be distributed as a homogenous PPP, independent of the BS locations' process.

We consider the SINR model of transmission for each BS-MU communication, where with ARQ, a packet is retransmitted from the BS until the SINR seen at the MU is above a certain threshold. Also, each BS serves all the MUs located in its Voronoi cell in a round-robin manner by dividing its slots/bandwidth equally among them to closely model the 'fair' practical implementation.

BSs in real-life cellular wireless networks are limited in their ability to coordinate their transmissions in order to control inter-cell interference. We begin by considering that each BS is only allowed to use local strategies, i.e., each BS's transmission decisions can only be based on local channel conditions (path-loss or fading gain) at the MU or feedback (ack/nacks) from the MU it is serving. Note that with ARQ, the local strategies are allowed to be adaptive, for example, a BS can use power control, or choose whether to transmit at all, given the history of ack/nacks. 
We later extend our model to include the realistic small-scale BS coordination, where nearby basestations can schedule their transmissions together, similar to coordinated multi-point (CoMP).

With ARQ, let $D$ be the number of retransmissions needed for a packet from BS $x$ to be successful at its MU $y$, defined as 
$$D = \min\{t: \SINR_{xy}(t) \ge \beta\}.$$ $D$ has the interpretation of delay, the time (or retransmissions) needed to receive the packet successfully. 

Let $\lambda$ be the BS density (per $m^2$) of the cellular network. With ARQ, to count for the actual rate of successful packet transmission, we consider a natural definition of {\bf capacity} (in the downlink) that is proportional toð the reciprocal of the expected delay. For analysis, we consider a typical MU, which without loss of generality is assumed to be located at the origin, that connects to its nearest BS. 
Let $n_0$ be the number of MUs in the Voronoi cell that contains the origin where the typical-MU is located. For any local strategy 
$\cS$ used by BSs, the per-MU capacity is 
$$C_m(\cS) = \frac{1}{n_0\bbE\{D\}} \ \text{packets/sec},$$
the per-BS capacity with $\cS$ under round-robin policy is
 \begin{equation*}
C_b(\cS) = \frac{1}{\bbE\{D\}}\ \text{packets/sec} .
\end{equation*} 
 and the
network wide capacity with $\cS$ is
 \begin{equation*}
C(\cS) = \frac{\lambda}{\bbE\{D\}}\ \text{packets/sec/m}^2 .
\end{equation*}

Hence, our capacity definition is 
 \begin{equation}
\label{eq:Cdefn}
C = \sup_\cS C(\cS),
\end{equation} i.e., we are looking for the best possible local (adaptive) BS strategy $\cS$ that achieves the maximum 
network-wide throughput. Since we are assuming that each BS serves all the MUs connected to itself in a round-robin manner, $C$ is independent of the MU density. Note that even though we are assuming round robin scheduling, the performance analysis is for a typical user (e.g. randomly chosen), and not the worst case user, e.g., the cell-edge users. Thus, the round robin policy is not a limiting factor.

The SINR model of transmission for the BS-MU link can be thought of as multi-ary-non-symmetic erasure channel, that is an extension of a binary-erasure-channel (BEC) with feedback, where each BS has multiple choices of transmit powers, and where the probability of erasure depends on the choice of the transmitted power. In comparison, in a BEC, there are only two possible transmission choices and the erasure probabilities (erasure probabilities) for both the choices are identical. As one may recall, the Shannon capacity for the BEC $C_{BEC}$ is achieved by a simple strategy of retransmitting the packet until it is correctly received, and  $C_{BEC}=\frac{1}{\bbE\{T\}}$ \cite{Cover2004}, where $T$ is total number of retransmissions needed. 
Our capacity definition \eqref{eq:Cdefn} is in similar spirit, where in addition, BS strategy can choose what power level to use after every erasure. One important distinction, however, compared to the usual BEC, is that the SINRs are temporally correlated, which in information theoretic language translates to the channel having memory.

\subsection{Prior work on Network Capacity}
Two related metrics of capacity; transport and transmission, have been defined for ad hoc networks and for both, 
exact results have been obtained. 
The transport capacity framework \cite{Gupta2000} assumes that $\lambda$ nodes are distributed uniformly on a unit-radius disc, and $\lambda/2$ source-destination pairs among them are chosen randomly. The transport capacity is the measure of how many bit-meters can be simultaneously transported across the network, where one bit-meter is transported if one
bit is successfully (SINR model) transmitted to a distance of one meter towards its destination. 
Under the path-loss only model, where fading gain is neglected, transport capacity has been shown to be $\Theta(\sqrt{\lambda})$ \cite{Gupta2000, Franceschetti2004}. Some extensions of transport capacity are also known, where with fast mobile node mobility it is known to scale as $\Theta(\lambda)$, and also in the information theoretic communication model, it scales as $\Theta(\lambda)$ \cite{TseScaling2007}.

Transmission capacity framework \cite{Weber2005} assumes the other type of randomness, where (transmitter) node locations are distributed as a homogenous PPP with density $\lambda$. Each transmitter has a receiver at a 
fixed distance $d_f$ from it in a random direction. 
For each $\ep$ (reliability constraint), the transmission capacity is defined as the largest density $\lambda$, such that the outage probability $\bbP(\SINR <\beta)$ for each transmission is less than $\ep$. 

Let $\lambda_\ep$ be the transmission capacity with reliability constraint of $1-\ep$. The distance scaled transmission capacity, where $\lambda_\ep$ is multiplied with the fixed transmitter-receiver distance $d_f$, is comparable with transport capacity,  and remarkably, also scales as $\Theta(\sqrt{\lambda_\ep})$ \cite{Weber2005, Baccelli06}, for both the
path-loss only and the path-loss plus fading model.


Similar to Shannon capacity, transport capacity has in-built reliability (modulo the SINR model definition), i.e., all counted bits are actually successfully received. Transmission capacity on the other hand follows a fire and forget principle (one-shot transmission), where a packet is guaranteed to be successfully received with probability $1-\epsilon$, but if unsuccessful, there is no penalty or a retransmission procedure.  
Transport capacity definition, however, allows for global arbitrary coordination (scheduling of transmission from all possible transmitters). 

With retransmissions, a generalization of the transmission capacity, called the {\it delay normalized} transmission capacity $C_{\sfR}(\cS)$ with BS strategy $\cS$ \cite{Andrews2009}, is defined as the number of successfully delivered packets in the network under the SINR model subject to a maximum limit on the number of retransmissions $\sfR$. Let $N_\sfR(t)$ be the number of successfully received packets at the MU until time $t$ with at most $\sfR$ retransmissions from the corresponding BS. Then, 
$$C_{\sfR}(\cS) = \lambda \lim_{t\rightarrow \infty} \frac{N_\sfR(t)}{t}.$$

From the renewal reward Theorem \cite{Cox1962}, $C_\sfR$ is also given by
 \begin{equation}
\label{eq:tcmulti}
C_\sfR(\cS)= \frac{ \lambda P_s^\sfR}{\bbE\{D(\sfR)\}}, 
\end{equation}
where $P_s^\sfR$ is the probability that a packet is successfully received at the MU, and $D(\sfR)$ is the random variable denoting the actual number of retransmissions needed, with at most $\sfR$ retransmissions. 

Our capacity definition \eqref{eq:Cdefn} is essentially the limit of $C_\sfR$ as
$\sfR \rightarrow \infty$, where $P_s^\sfR\rightarrow 1$, and yields 
 \begin{equation*}
C(\cS)= \lim_{\sfR \rightarrow \infty}\frac{P_s^\sfR \lambda}{\bbE\{D(\sfR)\}} = \frac{\lambda}{\bbE\{D\}},
\end{equation*} where $D$ is now the absolute delay between any BS $x$ and MU $y$ it is serving with no retransmission limit defined as 
$$D = \min\{t: \SINR_{xy}(t) \ge \beta\}.$$ 

\begin{remark} Using the limit of \eqref{eq:tcmulti}
as $\sfR \rightarrow \infty$ for our capacity definition is actually not a simplification compared to \eqref{eq:tcmulti}, since each BS serves all its MUs in a round-robin fashion, thereby precluding selection of which MU to serve in each slot. Thus, for any particular MU, with or without constraining the maximum number of retransmissions, there is no change in the average number of successfully received packets, since the underlying random process that determines success/failure of packets does not depend on the packet index. Thus, dropping a packet after hitting the constraint $\sfR$ or continuing with it, yields the same average rate of successful packet transmissions (1/expected delay).
\end{remark}

The delay normalized transmission capacity was first defined in \cite{Andrews2009}, and later studied in \cite{VazeTDR2011}. The basic problem in studying the 
delay normalized transmission capacity is the complicated correlation of SINRs across time slots under the PPP assumption on BS locations \cite{Ganti2009}. For simplifying analysis, \cite{Andrews2009} made a limiting assumption that SINRs across time slots are independent. 
A more rigorous approach was taken in \cite{VazeTDR2011}, but yielded limited analytical results because of the mentioned difficulty and the main results were derived only for very-low-BS density regime.

\subsection{Our results}
In this paper, we avoid making any simplifying/limiting assumptions for studying the joint distribution of SINRs across time slots for deriving bounds on the expected delay seen at any MU. The main result of our paper is that, when each BS is only allowed a local strategy or BSs can accomplish only small-scale coordination, upto constants, \footnote{The lower and upper bounds only differ by constants, and all $c_i$'s are constants.}
$$ \min_\cS\bbE\{D\}= \max\left\{\frac{1}{\lambda^{\frac{\alpha}{2}}}, \exp(\lambda)\right\},$$ and consequently the capacity is
$$
C =  \min\left\{\lambda^{\frac{\alpha}{2}+1}, \lambda\exp(-\lambda)\right\}.$$
In Fig. \ref{fig:capacity}, we illustrate the behaviour of the capacity (upto a constant) of the cellular wireless network as a function of the BS density as the blue curve, that tracks the minimum of the two functions $\lambda^{\frac{\alpha}{2}+1}$ and $ \lambda\exp(-\lambda)$.
\begin{figure}
\centering
\includegraphics[width=3.5in]{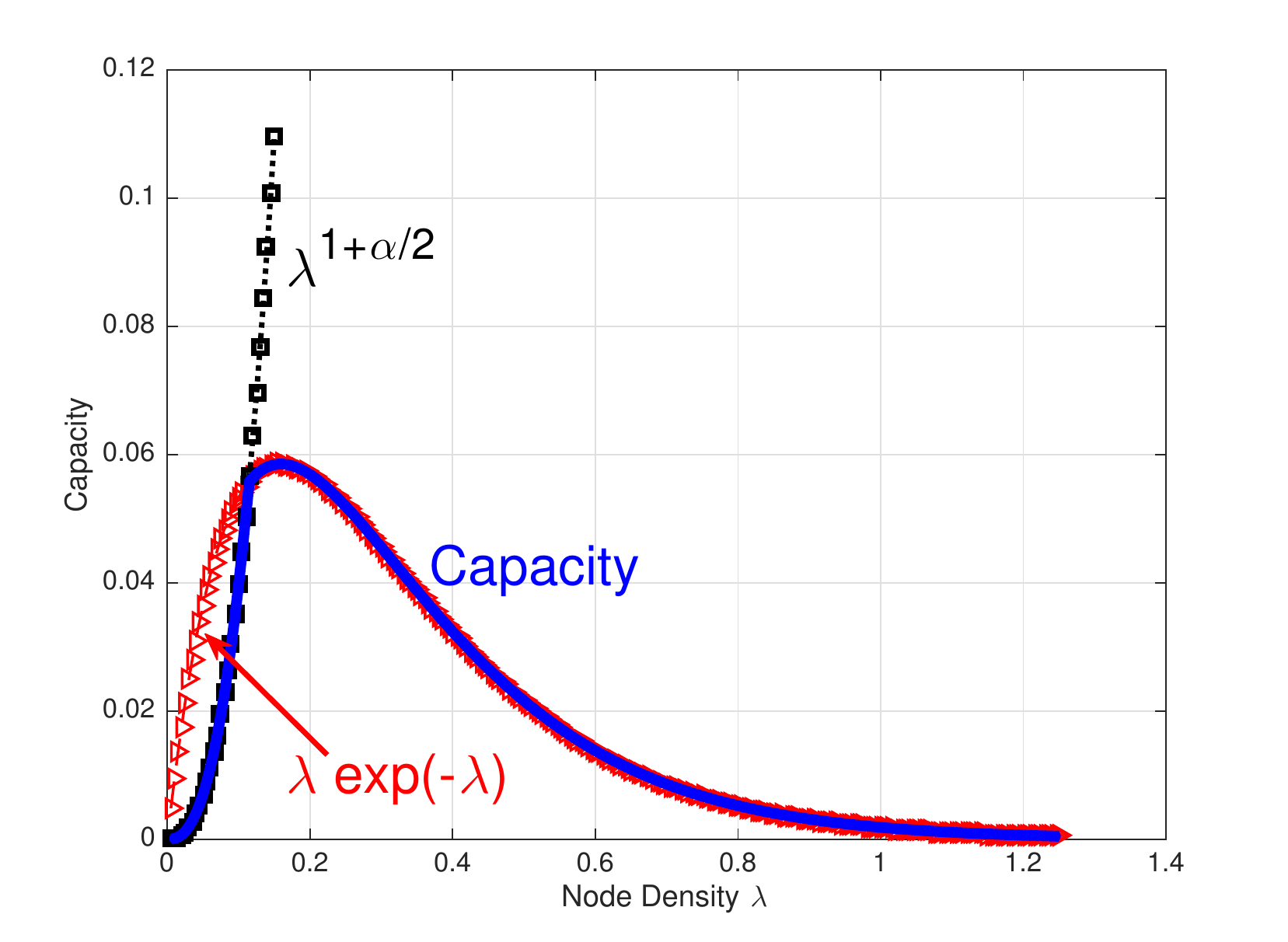}
\caption{Capacity of the Cellular wireless network as a function of the BS density.}
\label{fig:capacity}
\end{figure}
%

Thus, the capacity of a cellular wireless network increases polynomially in the low-BS density regime, then increases as $\lambda\exp(-\lambda)$ in the moderate BS density regime, and eventually starts to decreases exponentially with the density of basestations for local BS strategies or with small-scale BS coordination. 

The main result is derived via the following sub-results.
\begin{itemize}
\item We begin by showing that for any local BS strategy $\cS$
$$C(\cS) = \frac{\lambda}{\bbE\{D\}} \le  \min\left\{\frac{\lambda^{\frac{\alpha}{2}+1}}{c_1}, c_3\lambda\exp(-c_2\lambda)\right\}.$$ It is accomplished via deriving two separate bounds, first
$$\bbE\{D\}\ge \left(\frac{c_1}{\lambda^{\frac{\alpha}{2}}}\right),$$ that is relevant for low-BS density regime, and then $$\bbE\{D\} \ge  \frac{\left(\exp(c_2\lambda)\right)}{c_3},$$ that is tighter in the high-BS density regime.
\item The upper bound on the capacity is shown to hold even if small scale BS coordination is allowed, where a fixed number (independent of the BS density) of BSs can schedule their transmissions jointly. 
\item We show that a simple non-adaptive strategy $\cS$ (first proposed by us in \cite{VazeIyerAAP2016}), where each BS transmits power to completely nullify the path-loss based signal attenuation at the MU it is serving (using the knowledge of the distance to the MU), achieves $$\bbE\{D\} \le  \sqrt{c_5\left(1+ \frac{\Gamma(\alpha+1) }{(\pi \lambda)^\alpha}\right) \exp \left(c_4\lambda\right)},$$
$$\implies 
C(\cS) \ge \min\left\{\frac{\lambda^{\frac{\alpha}{2}+1}}{c_6}, c_7\lambda\exp(-c_3\lambda)\right\},$$
matching the upper bound on the capacity for any strategy $\cS$ upto constants. Importantly, a single policy achieves the capacity upper bound for the both the low and the high-BS density regimes, for which the capacity behaviour is quite different.
\item We show that for a pure ALOHA strategy $\cS$ that is commonly employed and most often studied, $C(\cS) = 0$. Moreover, any ALOHA type strategy, where the serving BS has the knowledge of the 
distance $d$ to its MU, can achieve non-zero capacity, only if its transmit power scales at least as fast as $d^{\alpha-2}$, where $\alpha > 2$ is the path-loss exponent. Since $d$ is a random variable with infinite support, this result directly implies that if each BS has a strict peak power constraint, then its capacity is $0$.
\end{itemize}

Our result shows that in the low-BS density regime the capacity of the cellular wireless network increases polynomially with the BS density, since with low BS density, interference is weak and signal power can be boosted using the proposed power control strategy. In the high-BS density regime, our result shows that the 'realistic' capacity of a cellular wireless network is fundamentally different and significantly smaller than previously thought via the known results on the transport or the transmission capacity (both scale as $\Theta(\sqrt{\lambda})$) in the high-BS density regime. 

The early indication for this negative result in the high BS density regime was visible in \cite{baccelli2011optimal}, that considered an ad hoc network where nodes can be either source/destination/relay in any time slot, rather than taking up a rigid/fixed BS or MU role as in our case. Assuming the location of all nodes 
to be a PPP, under the SINR model, \cite{baccelli2011optimal} showed that the ALOHA strategy has infinite expected delay for a packet to leave its source (to be received successfully at any node in the network, in particular also the nearest node) for any BS density $\lambda$, and consequently has zero effective rate of communication. 

What our upper bound result on the capacity shows that even if each BS is allowed adaptive strategies using local information depending on distance, fading gain or the ack/nacks sent by the MU, the delay  grows at least exponentially in the density of the basestations in the high-BS density regime. 

The proposed strategy to achieve the upper bound only depends on the distance between the BS and the MU, and does not require channel state information (CSI) that changes much faster than the distance. In particular, if the BS-MU distance is $d$ and the path-loss function is $\ell(d)$, then the strategy transmits with power $\ell(d)^{-1}$ with probability $\min\{1,M\ell(d)^{-1}\}$ in any slot dedicated for the MU, where $M$ is the average power constraint. 
Note that the strategy is not  adaptive, making it suitable for practical implementation with low complexity. The basic idea behind the strategy is to transmit infrequently, but whenever an attempt is made, sufficient power is used to compensate the path-loss completely. 
Thus, this strategy limits the overall network interference, while keeping the signal power high enough whenever an attempt is made, and more importantly achieves the capacity upper bound in both the low as well as the high BS density regime, where the capacity behaviour is very different.

One important design implication of our result is in terms of cell densification, where number of basestations are increased to decrease the individual cell-sizes in the hope of improving connectivity and communication rate. 
Single-shot performance metrics such as connection probability or average rate have been considered in past for analyzing the effects of cell densification \cite{zhang2015, chen2012small, samarasinghe2014, baccelli2015scaling, Kountouris2016}, that typically conclude that there is a phase transition; initially performance improves as the BS density increases, then it saturates, and eventually it starts to degrade, but explicit dependence was not identified. Also, it is believed that cell densification can improve the spectral efficiency via frequency reuse.
Taking a more comprehensive view via including retransmissions, where the effects of temporal correlations of SINRs are fully incorporated, we characterize the exact effect of cell densification on the capacity/long-term throughput. We show that in the low-BS density regime, cell densification increases the capacity polynomially, while in the high-BS density regime, BS densification leads to a exponential fall in the capacity. Thus, cell densification needs to be undertaken judiciously. 

\subsection{Comparison with Prior Work}
To compare our capacity metric with transport capacity or transmission capacity (relevant for high BS density), one has to multiply the average distance between the MU and its nearest BS distance $d_0$ (that scales as $ \Theta\left(\frac{1}{\sqrt{\lambda}}\right)$) with the capacity, which again yields $C d_0 = \Theta\left(\exp(-\lambda)\right)$.
The reasons for arriving at such fundamentally different result compared to the transport capacity and the transmission capacity, that both scale as $\Theta\left(\sqrt{\lambda}\right)$, are that transport capacity has stricter reliability requirement (SINR has to be $\ge \beta$ for each successful counted bit on realization basis) but allows for large scale BS/node scheduling, while transmission capacity has one-shot reliability constraint of $1-\ep$, and does not allow for retransmissions. In cellular networks, large scale BS scheduling is not possible, while retransmissions are an integral part of any deployment. The inability of large scale BS scheduling ensures that interference seen at any MU increases with BS density, while with retransmissions, the temporal correlation of SINRs degrades the delay performance.

The positive temporal correlation of SINRs with PPP distributed nodes that was first derived in \cite{Ganti2009}, is the principal reason behind the exponential increase of the expected delay in the high BS density regime. The high positive time correlation in the high-BS density regime ensures that if a packet has not been received till a sufficient number of retransmissions, then increasingly it become less and less likely to succeed, and leads to expected delay being exponential in the BS density. 
We would like to note that the analysis in this paper is non-asymptotic and holds for all $\lambda$, in contrast to the achievability of transport capacity \cite{Franceschetti2004}, that needed the number of nodes to go to infinity for the application of results from percolation theory.


An important result obtained in \cite{andrews2011tractable}, under the exact same model considered in this paper (BS and MU location processes being independent PPPs, and each MU connects to its nearest BS), is that ignoring noise, the one-shot connection probability $P_c = \bbP(\SINR >\beta)$ for any one slot is independent of the BS density $\lambda$. This result pointed towards showing that with increase in $\lambda$, the increase in the signal power because of the decreasing nearest BS distance completely compensates for the increase in the interference, and cell densification can lead to linear increase in network capacity with BS density. 
This conclusion (if valid) will also point towards potentially showing that expected delay $\bbE\{D\}$ is independent of $\lambda$, at least when noise is ignored, contradicting our results.
The real reason for $P_c$ being independent of $\lambda$ under the no-noise condition  \cite{andrews2011tractable}, is actually the use of an {\it amplifying} path-loss model of $\ell(x) = x^{-\alpha}$ in \cite{andrews2011tractable}, that produces unrealistic signal amplification for distances $x <1$. If instead one uses a more realistic path-loss function, e.g., $\ell(x) =\min\{1, x^{-\alpha}\}$ or $\ell(x) =1/(1+ x^\alpha)$ that do not have signal amplification, such a result will not be possible. Thus, in this paper, we consider path-loss function $\ell(x) =\min\{1, x^{-\alpha}\}$, but all our results will apply for any non-amplifying path-loss function, such as $\ell(x) =1/(1+ x^\alpha)$.

\subsection{Tradeoffs} We discuss some of the tradeoffs involved in the context of our results on the capacity of cellular wireless networks. 

\begin{itemize}
\item One integral feature used for improving the performance of a cellular networks has been power control, that is used extensively in CDMA networks, and also finds applicability in other deployments to control the {\it out-of-cell} interference. 
Under the SINR model considered in the paper, all the interference seen at any MU is coming from out of cell. Each BS tries to serve its MU as well as possible, essentially by transmitting with sufficient power which is in conflict with the rate achievable by MUs in other cells. Thus, the choice of transmit power is critical for maximizing the network-wide capacity, where the two conflicting objectives (maintaining sufficient SINR for own users, while minimizing interference for MUs in other cells) have to be balanced out by BSs. With increasing BS density, the number of BSs that are within a fixed distance (that really matter for SINR) from a MU increases linearly, thereby increasing interference rapidly. To counter this  increased interference, the serving BS has to increase its power, but that ends up causing increased interference at out of cell MUs. 

\item Another dimension to the problem is the use of ARQ in cellular networks. With strong positive temporal correlations of SINR, the 
success probability progressively decreases with the retransmission index. Thus, if a packet has not succeeded in few retransmissions, it is likely to take a large number of retransmissions (heavy-tail behavior). The power control issue together with the temporal correlations of SINR, limits the performance of a cellular wireless network, and the expected delay increases at least exponentially with the BS density, leading to capacity falling exponentially with BS density.

\item Path-loss exponent: The path-loss exponent $\alpha$ is an important feature governing the performance in a cellular network. Larger value of $\alpha$ leads to lower interference, but also leads to lower signal power. The path-loss exponent affects the performance depending on the BS density $\lambda$, since with large $\lambda$, the distance between the MU and the BS it connects to decreases, thereby increasing the signal power, but at the same time there are larger number of closeby interferers, even though each interferer contributes lower interference with larger $\alpha$. Thus, how does $\alpha$ impacts the capacity is an important question. We show that the capacity (both upper and lower bound) is indifferent to the value of $\alpha$ for $\lambda >1$, where the BS-MU distance is small and the interference dominates. In the low-BS density regime $\lambda < 1$, however, we show that the capacity decreases with $\alpha$ as $\lambda^{1+\alpha/2}$. Thus, the effect of $\alpha$ is only visible at low-BS densities, where with larger inter-BS distances, interference is weak, and larger $\alpha$ decreases the signal power significantly.  

\item MU density: We have implicitly assumed that the MU density is much higher than the BS density for our capacity definition, which is typically the case in a well-designed system. In case, the MU density is comparable to the BS density, it will lead to some BSs not having any MUs in their Voronoi regions, and consequently becoming inactive. Since the BS and the MU processes are independent, this will lead to an independent thinning of the active BS process (a new PPP with less density). Thus, our results are again applicable in this case. 
\end{itemize}

\subsection{Limitations}
By far, as one can argue, the biggest limitation of our work is the non-information-theoretic definition of capacity. However, as is well known, the Shannon capacity of networks with even 2/3/4 nodes, e.g., interference, broadcast channels, remains unknown. Thus, the hope of finding Shannon capacity of general wireless networks is rather slim. To better understand the fundamental tradeoffs and limitations in networks, 
starting with \cite{Gupta2000}, many simpler but functional definitions of capacity \cite{Weber2005, brian2007capacities} have been proposed and analysed. Our definition is a step in the same direction, with primary focus on cellular networks that are structured, and operate under some well established protocols such as ARQ. 
Even under this simpler definition, the capacity analysis remains non-trivial, and we are able to derive the exact dependence of capacity with the BS density.

Another possible limitation of our work can be argued in terms of assuming independent point processes for the BS and the MU locations, which has become very popular starting with \cite{andrews2011tractable}. One might expect BSs to be deployed in areas where a large number of MUs tend to be present, thus coupling the BS and MU locations process as proposed in the cluster model of \cite{suryaprakash2013stochastic}. We expect our results to be applicable even in this model, since the nearest neighbour distances still have similar distributions, and order-wise results should be similar. The exact results are, however, out of scope of this present paper. 


\section{Notation} For two random variables $X$ and $Y$, $X$ is defined to be stochastically dominated by $Y$ if $\bbP(X\le x) \ge \bbP(Y \le x)$ for all $x$. We will equivalently use stronger or weaker in comparing two random variables, where stronger means the random variable that dominates the other.

\section{System Model}
We consider a cellular network model, that consists of basestations $\{T_n\}$, whose locations are distributed according to a homogeneous PPP $\Phi =\{T_n\}$ with density $\lambda$, popularly called the {\it tractable} model starting from \cite{andrews2011tractable}.
This model is reasonable since with a PPP, given the number of nodes lying in the given area, the node locations are uniformly distributed similar to what is seen practically with many different types of basestations (macro, micro, femto, etc.) overlaid on top of each other.
\begin{figure}
\centering
\includegraphics[width=5.5in]{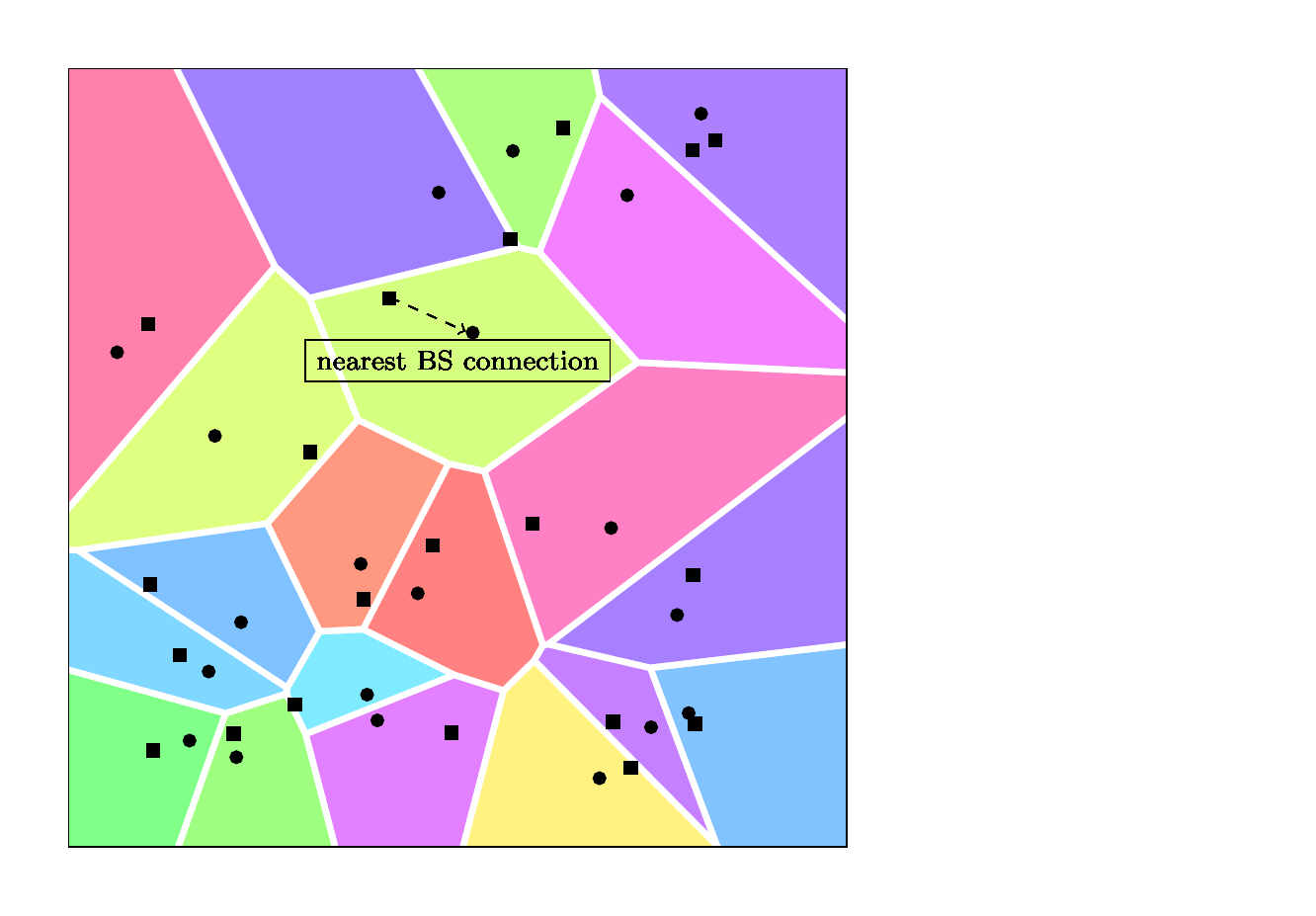}
\caption{Cellular wireless network, where basestations are denoted by circles, and whole area is divided in Voronoi regions, and each mobile (square) node connects to its nearest basestation.}
\label{fig:networkschematic}
\end{figure}

The MUs are also located according to an independent PPP $\Phi_R$ with density $\mu$, and each MU connects to its nearest basestation, which is what is typically the case in a practical deployment. 
Therefore, by the definition of Voronoi regions with respect to basestation locations, all MUs in a Voronoi cell/region connect to its representative basestation.  See Fig. \ref{fig:networkschematic} for the considered network schematic.

We consider a typical MU $\sfm$, and to study the capacity $C$, focus on the expected delay encountered by its packets transmitted from its nearest basestation $\sfn(\sfm)$. 

{\bf Transmission model:} We consider that time is slotted, and each BS transmits to all the MUs that lie in its Voronoi region/cell in a round-robin manner, i.e., equally sharing time slots between them.

{\bf Path-loss model:} We consider the distance based path-loss (signal attenuation) function to be 
$$\ell(d) = \min\{1, d^{-\alpha}\}, \ \text{for} \ \alpha > 2.$$
A simple function $\ell(d) = d^{-\alpha}$ is used widely in literature, however, for $d<1$, 
it produces signal amplification which is unrealizable. The results presented in this paper are valid for most other reasonable path-loss functions such as $\ell(d) = \frac{1}{1+d^{\alpha}}$ that do not have signal amplification. 

{\bf Fading model:} We assume that each node in the network is equipped with a single antenna, and at time $t$, the fading gain between BS $x$ and MU $y$ is denoted by $h_{x,y}(t)$ that is assumed to be exponentially distributed with parameter $1$. Moreover, $h_{x,y}(t)$ is assumed to be independent and identically distributed for all time $t$, and all location pairs $x,y$. 

{\bf Power Constraint:} We assume that each BS has an average power constraint of $M$.

Without loss of generality, we assume that the typical MU $\sfm$ is located at the origin, and the distance to the nearest basestation $\sfn(\sfm)$ from origin is $d_0$, while any other BS 
$z$ is located at distance $z$ (abuse of notation) from the origin. Since we focus on only the MU $\sfm$ located at the origin, we abbreviate $h_{z,0}(t)$ to just $h_z(t)$.
For BS $z$, let $P_z(t)$ be the  transmit power at time $t$, and $\1_z(t)$ be the indicator variable  denoting whether BS $z$ is transmitting at time $t$ or not, with $\bbP(\1_z(t)=1) = p_z(t)$.

The received signal at $\sfm$ located at the origin that is connected to its nearest basestation $\sfn(\sfm)$  is given by 
\begin{align}\nn
y(t) =& \sqrt{P_{\sfn(\sfm)}(t) \ell(d_0) h_{\sfn(\sfm)}(t)} \1_{\sfn(\sfm)}(t) s_t(\sfn(\sfm)) \\ \label{eq:sigmodel}
&+ \sum_{z \in \Phi \backslash \{\sfn(\sfm)\}}\sqrt{ \gamma P_z(t) \ell(z) h_{z}(t)} \1_z(t)s_t(z) + w,
\end{align}
where $s_t(z)$ is the signal transmitted from BS $z$ with power $P_z(t)$ at time $t$, $w$ is the AWGN with variance $\sfN$, 
and $0 < \gamma \le 1$ is the processing gain of the system or the interference
suppression parameter which depends on the transmission/
detection strategy, for example, on the orthogonality between
codes used by different legitimate nodes during simultaneous
transmissions.

Thus, the SINR at $\sfm$ from its nearest BS $\sfn(\sfm)$
in time slot $t$ is given by
\begin{equation}\label{eq:SINR}
\SINR_{\sfn(\sfm), \sfm}(t) = \frac{P_{\sfn(\sfm)}(t) h_{\sfn(\sfm)}(t) \ell(d_0) \1_{\sfn(\sfm)}(t)}{ \ga \sum_{z \in \Phi \backslash \{\sfn(\sfm)\}} P_z(t) \1_z(t)
h_z(t) \ell(z) + \sfN}.
\end{equation}

The transmission from BS $x$ to MU 
$y$ is deemed {\it successful} at time $t$, if $\SINR_{xy}(t) > \beta$,
where $\beta > 0$ is a fixed threshold depending on the rate of information transfer. Let 
\begin{equation}
e_{xy}(t) = 
\begin{cases}
1 & \text{if} \ \SINR_{xy}(t) > \beta, \\
0 & \text{otherwise}.
\end{cases} \end{equation}
Since $h_{(.)}(t)$ is a random variable, multiple
transmissions may be required for a packet to be successfully
received at any node. Thus, a measure of delay, i.e., the number of retransmissions needed to successfully receive packets, is required, which is defined as follows.

\begin{defn}\label{defn:delay} Let the minimum time (delay) taken by any
packet to be successfully received at $\sfm$ located at the origin from its nearest BS $\sfn(\sfm)$ be
$$D = \min\left\{ t >0 : e_{\sfn(\sfm), \sfm}(t) =1  \right\}.$$
\end{defn}

\begin{defn}\label{defn:capacity} As stated earlier, we consider the capacity to be 
$$C = \sup_{\cS} \frac{\lambda}{\bbE\{D\}},$$
where $\cS$ is any BS strategy followed by all BSs.
\end{defn}
Thus, we are fixing BS density $\lambda$ and SINR threshold $\beta$, and looking for the best possible strategy that achieves the capacity. Next, we derive upper bounds on this capacity definition.

\section{Upper Bound on Capacity}
\subsection{ALOHA type strategies}
We begin by considering the most common ALOHA type strategies, where each BS transmits with probability $p$ i.i.d. in each slot with power $P$, such that $pP=M$ to satisfy the average power constraint of $M$. For finding a lower bound on the expected delay, we consider the scenario where  other than BS $\sfn(\sfm)$, there is no other active BS in the network. Thus, the interference seen at $\sfm$ is zero, and we only consider additive noise, in which case $\SINR= \SNR$. Clearly, this will result in lower expected delay. 

Without any interference, there are only two sources of randomness, i) distance $d_0$ between BS $\sfn(\sfm)$ and $\sfm$ that is fixed for all time slots given a realization of $\Phi$ and ii) fading variables $h_{\sfn(\sfm)}(t)$ that are i.i.d. for each time slot $t$. Note that $p$ and $P$ can be arbitrary, however, with ALOHA strategy, $p$ has to be constant across time slots, so it cannot depend on $h_{(.)}(t)$, but can depend on $d_0$. 

We first show that if the transmit power $P$ does not scale as fast as $d_0^{\alpha-2}$, then the expected delay with an ALOHA type strategy is infinity. 
\begin{thm}\label{thm:alohalb} For any ALOHA type strategy, if $P = o(d_0^{\alpha-2})$,  
$$\bbE\{D\} = \infty,$$ and consequently $$C = 0.$$
\end{thm}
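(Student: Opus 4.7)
The plan is to use the reduction to the interference-free setting already suggested in the excerpt: dropping all interferers can only decrease $\bbE\{D\}$, so any lower bound derived in that reduced model is a lower bound for the actual network. With only noise, conditioning on $d_0$, a transmission in slot $t$ succeeds iff $P(d_0)\,\ell(d_0)\,h_{\sfn(\sfm)}(t) \ge \beta \sfN$, which occurs (given that a transmission is attempted) with probability $\exp\!\left(-\beta \sfN / (P(d_0)\,\ell(d_0))\right)$ by exponentiality of $h$. Since the strategy is ALOHA, attempts are i.i.d.\ Bernoulli$(p(d_0))$ across slots (with $p(d_0)\,P(d_0)=M$), and the per-slot success events are therefore i.i.d.\ Bernoulli with parameter
\begin{equation*}
q(d_0) \;=\; p(d_0)\,\exp\!\left(-\frac{\beta \sfN}{P(d_0)\,\ell(d_0)}\right).
\end{equation*}
Hence, given $d_0$, $D$ is geometric and $\bbE\{D\mid d_0\} = 1/q(d_0)$.

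Next, I would use the constraint $p(d_0) \le 1$ and $p(d_0)P(d_0)=M$ to conclude $P(d_0) \ge M$, so the prefactor $1/p(d_0) = P(d_0)/M \ge 1$. Restricting to $d_0 \ge 1$, where $\ell(d_0) = d_0^{-\alpha}$, gives
\begin{equation*}
\bbE\{D\mid d_0\} \;\ge\; \exp\!\left(\frac{\beta \sfN\, d_0^{\alpha}}{P(d_0)}\right).
\end{equation*}
The nearest-BS distance $d_0$ in a homogeneous PPP of intensity $\lambda$ has density $2\pi\lambda r\exp(-\pi\lambda r^2)$, so
\begin{equation*}
\bbE\{D\} \;\ge\; \int_{1}^{\infty} 2\pi\lambda r \exp\!\left(\frac{\beta \sfN\, r^{\alpha}}{P(r)} - \pi\lambda r^{2}\right)\, dr.
\end{equation*}

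Finally, I would exploit $P(r) = o(r^{\alpha-2})$ to force the integrand to blow up. Writing the exponent as $r^{2}\bigl[\beta \sfN\, r^{\alpha-2}/P(r) - \pi\lambda\bigr]$ and noting that $r^{\alpha-2}/P(r) \to \infty$ by hypothesis, one can pick $R$ large enough so that for $r \ge R$ the bracketed term exceeds $\pi\lambda$; the integrand on $[R,\infty)$ then dominates $r\,\exp(\pi\lambda r^{2})$, which is not integrable. Thus $\bbE\{D\} = \infty$, and by Definition~\ref{defn:capacity}, $C=0$.

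The main obstacle is handling the dependence of $P$ (equivalently $p$) on $d_0$: the strategy is allowed to trade off attempt frequency against transmit power per realization of $d_0$, so one cannot treat $P$ as a single scalar. The key observation that unlocks the proof is that the power constraint forces $P(d_0) \ge M$, which kills any hope of making $1/p$ tame, and then the growth condition $P = o(d_0^{\alpha-2})$ is exactly the threshold at which the signal-attenuation exponent $\beta\sfN d_0^{\alpha}/P(d_0)$ overtakes the Rayleigh-like tail $\pi\lambda d_0^{2}$ of the nearest-BS distance, making the delay expectation diverge.
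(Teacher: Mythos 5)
Your proposal is correct and follows essentially the same route as the paper's proof: remove all interference, compute the conditional geometric delay $1/q(d_0)$, lower-bound the prefactor via the power constraint, and integrate against the nearest-neighbour density $2\pi\lambda r\exp(-\pi\lambda r^2)$ to show divergence when $P=o(d_0^{\alpha-2})$. The only differences are cosmetic (you bound $1/p\ge 1$ via $P\ge M$ where the paper keeps the factor $P/M$, and you make explicit the final step that the exponent $r^2\bigl[\beta\sfN r^{\alpha-2}/P(r)-\pi\lambda\bigr]$ eventually dominates $\pi\lambda r^2$, which the paper leaves as an assertion).
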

\begin{proof} To prove this result, consider that other than BS $\sfn(\sfm)$ there is no other active BS in the network. Thus, the interference seen at $\sfm$ is zero. 
Let the BS $\sfn(\sfm)$ use an ALOHA type policy where it transmits in each slot with probability $p$ i.i.d. with power $P$, such that $pP=M$ to satisfy the average power constraint. 

Given a realization of the BS PPP $\Phi$, the distance $d_0$ is fixed. Thus, conditioned on $d_0$, the event of transmission $\1_\sfn(\sfm)=1$, and the success event $e_{\sfn(\sfm), \sfm}(t) = 1$ $\left(h_{\sfn(\sfm)}(t) > \frac{\beta \sfN \ell^{-1}(d_0)}{P}\right)$ are independent. Moreover, $e_{\sfn(\sfm), \sfm}(t)$ is also independent across $t$ because of i.i.d. assumption on $h_{(.)}(t)$. Recall that with independent trials and per-trial success probability of $q$, the expected time till the earliest success is $1/q$. Hence 
\begin{eqnarray}\nn
\bbE\{D| \Phi\} &=& \frac{1}{p \bbP\left(h_{\sfn(\sfm)}(t) > \frac{\beta \sfN \ell^{-1}(d_0)}{P}\right)}, \\
&=& \frac{1}{p \exp\left(\frac{-\beta \sfN \ell^{-1}(d_0)}{P}\right)},
\end{eqnarray}  
since $h_{(.)}(t) \sim EXP(1)$.
Since $pP \le M$, we have 
\begin{eqnarray}
\bbE\{D| \Phi\} 
&\ge & \frac{P\exp\left(\frac{\beta \sfN \ell^{-1}(d_0)}{P}\right)}{M }.
\end{eqnarray}  
Note that $\ell(d_0) \le d_0^{-\alpha}$. Thus, 
\begin{eqnarray}
\bbE\{D| \Phi\} 
&\ge & \frac{P \exp\left(\frac{\beta \sfN d_0^{\alpha}}{P}\right)}{M }.
\end{eqnarray}  
Hence  
$$\bbE\{D\} = \bbE\{\bbE\{D| \Phi\} \} \ge \frac{ \bbE\left\{P\exp\left(\frac{\beta \sfN  d_0^{\alpha}}{P}\right)\right\}}{M }.$$
From Proposition \ref{prop:nnpdf}, we know that $f_{d_0}(y) = 2 \pi \lambda y \exp\left(-\lambda \pi y^2\right)$. Thus, 
\begin{eqnarray}\nn
\bbE\{D\} 
&\ge & \frac{1}{M}\int_0^\infty P 2 \pi \lambda y \exp\left(-\lambda \sfN \pi y^2\right) \exp\left(\frac{\beta \sfN y^{\alpha}}{P}\right) dy, \\
&=& \infty,
\end{eqnarray} if $P = o(d_0^{\alpha-2})$ where $\alpha >2$.

\end{proof}
Following important corollaries are immediate.
\begin{cor}\label{cor:ALOHA1} With the pure ALOHA strategy, where the BS transmits independently of the distance $d_0$ to the MU it is transmitting to,  $$\bbE\{D\} = \infty.$$
\end{cor}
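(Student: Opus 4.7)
The plan is to observe that Corollary \ref{cor:ALOHA1} is essentially an immediate specialization of Theorem \ref{thm:alohalb}. In the pure ALOHA regime, the BS's transmission probability $p$ and the transmit power $P$ are both fixed constants that do not depend on the random distance $d_0$ to the served MU. Since $P$ is a constant while $d_0^{\alpha-2}\to\infty$ as $d_0\to\infty$ (because $\alpha>2$), we have $P = o(d_0^{\alpha-2})$, so the hypothesis of Theorem \ref{thm:alohalb} is satisfied and the conclusion $\bbE\{D\}=\infty$ follows immediately.

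If one prefers a self-contained argument that does not appeal to the $o(\cdot)$ formulation, the plan is to plug $P = \text{const}$ directly into the lower bound derived inside the proof of Theorem \ref{thm:alohalb}. Conditioning on $\Phi$ and using the no-interference relaxation and the exponential fading, we have
\begin{equation*}
\bbE\{D\mid \Phi\} \;\ge\; \frac{P}{M}\exp\!\left(\frac{\beta \sfN\, d_0^{\alpha}}{P}\right),
\end{equation*}
and then de-conditioning with the nearest-BS density $f_{d_0}(y)=2\pi\lambda y\exp(-\lambda\pi y^2)$ gives
\begin{equation*}
\bbE\{D\} \;\ge\; \frac{P}{M}\int_0^\infty 2\pi\lambda y\,\exp\!\left(\frac{\beta \sfN y^{\alpha}}{P} - \lambda\pi y^2\right) dy.
\end{equation*}
Since $\alpha>2$ and $P$ does not depend on $y$, the exponent $\tfrac{\beta\sfN y^\alpha}{P}-\lambda\pi y^2$ tends to $+\infty$ as $y\to\infty$, so the integrand blows up and the integral diverges. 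Hence $\bbE\{D\}=\infty$, and therefore $C=\lambda/\bbE\{D\}=0$.

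The only subtle point, and the step worth being explicit about, is the role of $\alpha>2$: it is precisely the gap between the quadratic decay $-\lambda\pi y^2$ from the nearest-BS distribution and the super-quadratic growth $\beta\sfN y^\alpha/P$ inside the fading success probability that drives the divergence. Any ALOHA scheme that keeps $P$ bounded (or even grows $P$ too slowly with $d_0$) cannot compensate for this super-quadratic growth, which is exactly the obstruction Theorem \ref{thm:alohalb} encodes. No further computation is needed beyond invoking that theorem.
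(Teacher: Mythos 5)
Your proposal is correct and matches the paper's intended argument: the paper states Corollary \ref{cor:ALOHA1} as an immediate consequence of Theorem \ref{thm:alohalb}, exactly as in your first paragraph, since a constant $P$ is trivially $o(d_0^{\alpha-2})$ when $\alpha>2$. Your self-contained second argument simply unwinds the integral bound already inside the proof of Theorem \ref{thm:alohalb} and correctly identifies that the exponent $\beta\sfN y^{\alpha}/P-\lambda\pi y^{2}$ diverges, so nothing is missing.
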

\begin{cor}\label{cor:ALOHA} If there is a peak power constraint at each BS, then again with the ALOHA strategy,  
$$\bbE\{D\} = \infty.$$
\end{cor}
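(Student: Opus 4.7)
The plan is to deduce Corollary \ref{cor:ALOHA} as an immediate consequence of Theorem \ref{thm:alohalb} by checking that a peak power constraint forces the hypothesis $P = o(d_0^{\alpha-2})$. First I would observe that under a peak power constraint there exists a constant $P_{\max} < \infty$ such that the transmit power satisfies $P(d_0) \le P_{\max}$ regardless of the realization of $d_0$. Since $\alpha > 2$ and $P_{\max}$ does not depend on $d_0$, we trivially have
\[
\frac{P(d_0)}{d_0^{\alpha-2}} \;\le\; \frac{P_{\max}}{d_0^{\alpha-2}} \;\longrightarrow\; 0 \quad \text{as } d_0 \to \infty,
\]
which is precisely the asymptotic condition $P = o(d_0^{\alpha-2})$ invoked in the hypothesis of Theorem \ref{thm:alohalb}. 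Applying that theorem then yields $\bbE\{D\} = \infty$, and therefore $C = 0$.

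If a self-contained proof is preferred over an appeal to the theorem, I would instead rerun the last integral display from the proof of Theorem \ref{thm:alohalb} with $P$ replaced by the upper bound $P_{\max}$. Using Proposition \ref{prop:nnpdf} for the density of $d_0$, the lower bound becomes
\[
\bbE\{D\} \;\ge\; \frac{P_{\max}}{M}\int_0^\infty 2\pi\lambda y\, \exp\!\Bigl(-\lambda \pi y^2 + \tfrac{\beta \sfN y^{\alpha}}{P_{\max}}\Bigr)\, dy,
\]
and since $\alpha > 2$ the exponent is eventually positive and tends to $+\infty$ as $y\to\infty$, so the integrand blows up and the integral diverges. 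Either route establishes $\bbE\{D\} = \infty$ and hence $C = 0$.

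The main (and essentially only) obstacle is conceptual rather than technical: one must recognize that the nearest–BS distance $d_0$ has unbounded support under the PPP model, so any \emph{hard} cap on transmit power cannot keep up with the $d_0^{\alpha-2}$ growth rate required to compensate for path loss in the non-interference-limited regime used in Theorem \ref{thm:alohalb}. Once this observation is in hand, the corollary is a one-line reduction. No new estimates beyond those already developed for the main theorem are needed.
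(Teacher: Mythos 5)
Your proof is correct and takes essentially the same route as the paper: the paper's own argument is exactly your one-line observation that $d_0$ has unbounded support under the PPP, so any hard cap $P \le P_{\max}$ forces $P = o(d_0^{\alpha-2})$ for $\alpha > 2$, and Theorem~\ref{thm:alohalb} applies. (Your optional self-contained integral is also sound in substance, though strictly the prefactor should be justified by noting that $P \mapsto P\exp(c/P)$ is decreasing for $P < c$, or by simply dropping it via $p \le 1$; this does not affect the divergence.)
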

\begin{proof}
Note that $d_0$ is unbounded, while a peak power constraint will always have $P = o(d_0^{\alpha-2})$, since $\alpha >2$. Hence $\bbE\{D\} = \infty$.
\end{proof}
Corollary \ref{cor:ALOHA1} shows that the pure ALOHA strategy that assumes no knowledge of distance $d_0$, has infinite delay and leads to zero capacity. This is similar to what was shown in \cite{baccelli2011optimal} for a single ad hoc network, that the expected delay for any packet leaving its source successfully is infinite using an ALOHA strategy. Even when an ALOHA strategy can modulate its power depending on distance $d_0$, this result shows that the power has to scale at least as much as $d_0^{\alpha-2}$ to get non-zero capacity. 
Corollary \ref{cor:ALOHA} implies that any BS can only serve MUs that lie within a bounded area in its Voronoi cell under the peak-power constraint using the ALOHA strategy.
\begin{proposition}\label{prop:nnpdf} The cumulative distribution function and probability distribution function of nearest BS distance $d_0$ is 
$$\bbP(d_0 > y) = \exp(-\lambda \pi y^2)$$ and $$f_{d_0}(y) = 2\lambda \pi y \exp(-\lambda \pi y^2).$$
\end{proposition}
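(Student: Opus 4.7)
The plan is to derive the CDF directly from the void probability of the homogeneous PPP $\Phi$ and then differentiate to obtain the density. The key observation is that the event $\{d_0 > y\}$ is exactly the event that no BS of $\Phi$ lies in the open disc $B(0,y) \subset \bbR^2$ of radius $y$ centered at the typical MU at the origin, i.e., $\{\Phi(B(0,y)) = 0\}$.

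First, I would invoke the defining property of a homogeneous PPP of density $\lambda$: for any bounded Borel set $A$, the number $\Phi(A)$ of points in $A$ is Poisson with mean $\lambda |A|$, where $|A|$ denotes Lebesgue measure. Applying this with $A = B(0,y)$ gives $|A| = \pi y^2$, and hence
\begin{equation*}
\bbP(d_0 > y) \;=\; \bbP(\Phi(B(0,y)) = 0) \;=\; e^{-\lambda \pi y^2},
\end{equation*}
for $y \ge 0$, which is the first claim. (Strictly speaking, since we consider a typical MU, this uses Slivnyak's theorem together with the independence of $\Phi$ and the MU process, so the Palm distribution of $\Phi$ with respect to the MU at the origin coincides with the distribution of $\Phi$.)

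Then I would obtain the density by differentiating the CDF $F_{d_0}(y) = 1 - e^{-\lambda \pi y^2}$ on $y \ge 0$:
\begin{equation*}
f_{d_0}(y) \;=\; \frac{d}{dy}\left(1 - e^{-\lambda \pi y^2}\right) \;=\; 2 \lambda \pi y \, e^{-\lambda \pi y^2},
\end{equation*}
which is the second claim.

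There is no real obstacle here; the only subtlety worth noting is the use of Slivnyak's theorem to justify computing the nearest-BS distance under the Palm distribution of the MU process as if the BS process were an independent copy of $\Phi$. Since the BS PPP $\Phi$ and the MU PPP $\Phi_R$ are assumed independent, conditioning on a typical MU does not alter the law of $\Phi$, so the computation above applies verbatim.
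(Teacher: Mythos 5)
Your proof is correct and follows the same route the paper takes: the paper states Proposition~\ref{prop:nnpdf} without proof, but its proof of the generalization to the $k$-th nearest distance (Proposition~\ref{prop:nnkpdf}) uses exactly your argument --- identify $\{d_0 > y\}$ with the void event for the disc of radius $y$, apply the Poisson count property to get $\exp(-\lambda\pi y^2)$, and differentiate. Your added remark about Slivnyak's theorem and the independence of $\Phi$ and $\Phi_R$ is a correct and worthwhile point of rigor that the paper leaves implicit.
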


We next consider more general (potentially adaptive) strategies, and derive lower bounds on their expected delays. 
\subsection{General Strategies}

\begin{defn}\label{defn:strategy} Let $\cS = \{(p_i,P_i), i\in \cI: p_iP_i \in [M/\tau, M]\}$, where $\tau \ge 1$ is a constant,  be a collection of probability of transmission $p_i$ and power transmission $P_i$ pairs. A {\it strategy} for BS is to choose any element of $\cS$ at any given time slot, 
where the current choice can be adaptive, i.e., it can depend on entire history of earlier choices.\footnote{This strategy can closely emulate any transmission policy used by BS under an average power constraint $M$. 
The lower bound on $p_iP_i \ge M/\tau$ reflects the considered setting of BS density being smaller than the MU density, where each BS has at least one MU in its Voronoi cell.} If $p_s,P_s$ is chosen for a slot, then a BS transmits power $P_s$ with probability $p_s$ in that slot.
 Note that with this definition, the average power constraint of $M$ is satisfied automatically. 
\end{defn}

In this paper, for being as close to practical wireless networks, we begin by restricting ourselves to local strategies that are defined as follows. In the sequel, we consider small-scale BS coordination strategies as well.

\begin{defn} A transmission strategy adopted by a BS for communicating with its MU $\sfm$ is called {\bf local}, if it only depends on either the fading gain or the distance between itself and the connected MU $\sfm$, or the history of success/failure event $e_{\sfn(\sfm), \sfm}$ at $\sfm$ i.e., ack/nack signal sent back by the $\sfm$. In particular, BS $\sfn(\sfm)$, can choose $p_{\sfn(\sfm)}(t), P_{\sfn(\sfm)}(t)$ pair for time slot $t$, depending on $h_{\sfn(\sfm)}(t)$, $d_0$, and history of $e_{\sfn(\sfm), \sfm}(s)$, and $p_{\sfn(\sfm)}(s), P_{\sfn(\sfm)}(s)$ for  $1\le s< t$.
\end{defn}

ALOHA is one simple example of a local strategy. Essentially local strategies preclude global coordination or scheduling across BSs that is very costly and seldom employed in practice in cellular networks. Importantly, local strategies allow adaptation, e.g., power control, that is commonly implemented in practice.

Next, we present the first main result of the paper that upper bounds the capacity of any cellular wireless network, when each BS follows a local strategy.

\begin{thm}\label{thm:lb} For any local strategy $\cS$ employed by a BS, the expected delay at the typical MU $\sfm$ satisfies
\begin{equation}\label{eq:newlb1}
\bbE\{D\}\ge \left(\frac{c_1}{\lambda^{\frac{\alpha}{2}}}\right),
\end{equation} as well as 
\begin{equation}\label{eq:newlb2}
\bbE\{D\} \ge  \frac{\left(\exp(c_2\lambda)\right)}{c_3}.
\end{equation}
Consequently, the network wide capacity is 
 $$C = \frac{\lambda}{\bbE\{D\}} \le  \min\left\{\frac{\lambda^{\frac{\alpha}{2}+1}}{c_1}, c_3\lambda\exp(-c_2\lambda)\right\}.$$ 
 
\end{thm}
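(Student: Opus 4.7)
The plan is to prove the two lower bounds separately: the first via a noise-only reduction that dominates at low $\lambda$, and the second via a PPP-interference analysis that exploits the spatial locality of the strategy together with the shared configuration $\Phi$ across time slots, and that dominates at high $\lambda$.

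For bound~(1), I would drop interference (valid because $\SINR\le\SNR$). For any slot choice $(p_t,P_t)\in\cS$ with $p_tP_t\le M$, the noise-only per-slot success probability given $d_0$ is $p_t\exp(-\beta\sfN/(P_t\ell(d_0)))$. Maximizing over $(p,P)$ with $pP=M$ (the same optimization as in the proof of Theorem~\ref{thm:alohalb}) gives $q^\star(d_0)=M\ell(d_0)/(e\beta\sfN)$ at $P^\star=\beta\sfN/\ell(d_0)$. Since this upper bound on the conditional per-slot success probability holds regardless of any adaptive history, a Bernoulli-trials comparison yields $\bbE\{D\mid d_0\}\ge 1/q^\star(d_0)=e\beta\sfN/(M\ell(d_0))$. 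Averaging using Proposition~\ref{prop:nnpdf} and $\ell(d_0)\le d_0^{-\alpha}$, with $\bbE\{d_0^\alpha\}=\Gamma(\alpha/2+1)/(\pi\lambda)^{\alpha/2}$, gives $\bbE\{D\}\ge c_1/\lambda^{\alpha/2}$.

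For bound~(2), I would condition on the BS PPP $\Phi$. Since the fading $h_z(t)$ and activities $\1_z(t)$ are i.i.d.\ in $t$, the per-slot success events are conditionally i.i.d.\ Bernoulli$(q(\Phi))$, so $\bbE\{D\mid\Phi\}=1/q(\Phi)$; adaptive strategies driven by the feedback history $\{e_{\sfn(\sfm),\sfm}(s)\}_{s<t}$ are absorbed via the tower property applied to the conditional per-slot probability. Using the Laplace transform of the interference,
\begin{equation*}
q(\Phi)\le p_s\!\!\prod_{z\in\Phi\setminus\{\sfn(\sfm)\}}\!\left(1-p_z+\frac{p_z}{1+\beta\gamma P_z\ell(z)/(P_s\ell(d_0))}\right),
\end{equation*}
together with the fact that each $(p_z,P_z)$ is chosen by a local strategy based on interferer $z$'s own distance to its served MU (whose law is that of a nearest neighbor in the MU PPP of density $\mu$ and is thus independent of $\lambda$), I would derive a strategy-uniform bound $q(\Phi)\le c_4\exp(-c_5 N(\Phi))$ with $N(\Phi):=|\Phi\cap B(0,1)|$ counting close interferers. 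Averaging with $N(\Phi)\sim\mathrm{Poisson}(\pi\lambda)$ and using $\bbE\{e^{c_5 N(\Phi)}\}=\exp(\pi\lambda(e^{c_5}-1))$ delivers $\bbE\{D\}\ge\exp(c_2\lambda)/c_3$.

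The main obstacle is the strategy-uniform exponential-in-$N(\Phi)$ bound on $q(\Phi)$: for a bursty interferer (very small $p_z$, very large $P_z$, still satisfying $p_zP_z\in[M/\tau,M]$), each individual product factor can be close to~$1$, so a naive per-interferer bound does not give exponential decay in $N(\Phi)$. The resolution must leverage that the serving BS is also subject to the symmetric constraint $p_sP_s\in[M/\tau,M]$: when $d_0<1$ (typical for large~$\lambda$ by Proposition~\ref{prop:nnpdf}), both $\ell(d_0)$ and $\ell(z)$ equal~$1$ for close interferers, so each factor depends only on $(p_z, P_z/P_s)$, and independence of the $\{d_{z,t}\}_z$ across interferers then lets one average each factor to a $\lambda$-independent quantity strictly below~$1$. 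Making this quantitative over all admissible local strategies (including those that adapt $(p_s,P_s)$ and $(p_z,P_z)$ via feedback) is where the bulk of the effort lies; once $c_5$ is in place, the Poisson moment generating function produces the claimed exponential-in-$\lambda$ lower bound on $\bbE\{D\}$ immediately.
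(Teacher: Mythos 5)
Your high-level architecture matches the paper's (a noise-only reduction for the polynomial bound; a close-interferer, Poisson-counting argument for the exponential bound), but both halves have genuine gaps. For bound \eqref{eq:newlb1}, you maximize the per-slot success probability only over \emph{fixed} pairs $(p,P)$ with $pP=M$, obtaining $q^\star(d_0)=M\ell(d_0)/(e\beta\sfN)$. But a local strategy may choose $(p_{\sfn(\sfm)}(t),P_{\sfn(\sfm)}(t))$ as a function of the instantaneous fading gain $h_{\sfn(\sfm)}(t)$, and the optimal such policy (truncated channel inversion, Proposition \ref{prop:waterfill}: transmit with power $\beta\sfN/(h\ell(d_0))$ only when $h\ge\delta$) achieves success probability $e^{-\delta}$ with $\delta$ set by $\int_\delta^\infty \frac{\beta\sfN}{\ell(d_0)x}e^{-x}\,dx=M$; when $M\ell(d_0)/(\beta\sfN)$ is small this exceeds your $q^\star(d_0)$ by a factor of order $\ln\bigl(\beta\sfN/(M\ell(d_0))\bigr)$, which is unbounded in $d_0$. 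So your claimed strategy-uniform upper bound on the per-slot success probability is false on the admissible class. The paper handles exactly this by applying Proposition \ref{prop:waterfill} to a stochastically stronger fading law with density $xe^{-x}$ (chosen so the power-constraint equation for $\delta$ solves in closed form), and only then integrates over $d_0$.

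For bound \eqref{eq:newlb2}, the assertion that, conditionally on $\Phi$, the per-slot success events are i.i.d.\ Bernoulli$(q(\Phi))$ is precisely what fails for adaptive strategies: the interferers' indicators $\1_z(t)$ and powers $P_z(t)$ may depend on their own MUs' ack/nack histories, which are themselves functions of past interference, so given $\Phi,\Phi_R$ they are neither independent across $t$ nor across $z$; your tower-property remark absorbs only the typical BS's own adaptation, not the interferers'. The paper's proof exists largely to repair this: it introduces ``restriction $1$'' (set $\ell=1$ on all non-typical links and remove all interference at non-typical MUs), under which each interferer's power profile depends only on its own i.i.d.\ fading --- restoring the conditional product form \eqref{eq:lbint200} --- and is stochastically dominated by the unrestricted profile, so the enhancement is lossless for a lower bound. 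Finally, you correctly identify the bursty-interferer obstruction but leave its resolution as ``where the bulk of the effort lies''; the paper closes it not by averaging over the interferers' MU distances but via the two-sided constraint $\tilde p_z\tilde P_z\in[M/\tau,M]$ together with the explicit hypothesis $\bbE\{\tilde p_z(t)\mid\Phi,\Phi_R\}\ge\eta>0$, which makes each unit-disc factor equal to $1-\eta\exp(-1/(M/\tau))$, a $\lambda$-independent constant strictly below one, after which the PGF of the PPP and the law of $d_0$ give the $\exp(c_2\lambda)$ growth. Without some such hypothesis your sketch does not close.
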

The proofs for lower bounds \eqref{eq:newlb1}, \eqref{eq:newlb2} are derived in
Appendix \ref{app:expdelaylb}, and 
Appendix \ref{app:expdelaylb2}, respectively.

Theorem \ref{thm:lb} shows that for low-BS densities, the expected delay decreases at least polynomially with the BS density, where \eqref{eq:newlb1} dominates. In the low-BS density regime, interference is weak, and increasing BS density results in decreased BS-MU distances, boosting the signal power.
For moderate and high BS densities, \eqref{eq:newlb2} dominates, and our result shows that the expected delay grows at least exponentially with the BS density. 

For large BS densities, Theorem \ref{thm:lb}
is essentially a negative result that shows that even when each BS has all the local information, that can be used adaptively, the expected delay increases at least exponentially with the density of BSs. Consequently, the network wide capacity decreases exponentially with the increase in the density of BSs in the high-BS density regime.

Theorem \ref{thm:lb} also suggests that both transport and transmission capacity definitions $\left(\text{both scale as \ $\Theta\left(\sqrt{\lambda}\right)$}\right)$ overestimate the {practically realizable} capacity of a cellular wireless network in the high-BS density regime. The reason for this is that even though with transport capacity, the reliability condition is strict, but large scale BS coordination or scheduling is allowed, while with transmission capacity, the reliability constraint is loose and communication is one-shot. 

We briefly discuss the key ideas used to derive the lower bounds \eqref{eq:newlb1}, \eqref{eq:newlb2} on the expected delay. 
To derive \eqref{eq:newlb1}, we focus on the low-BS density regime, where interference is weak. 
Thus, to derive \eqref{eq:newlb1}, we completely ignore the interference, and hence the SINR seen at $\sfm$ in any time slot $t$ is
\begin{equation}\label{eq:sinrlbspecial1}
\SINR_{\sfn(\sfm), \sfm}(t) = \frac{P_{\sfn(\sfm)} h_{\sfn(\sfm)} \1_{\sfn(\sfm)}(t) \ell(d_0)}{\sfN}.
\end{equation}

One intuitive (but incorrect) way to see the lower bound \eqref{eq:newlb1} is to remove fading $h_{\sfn(\sfm)}\equiv 1$, in which case, success happens ($\SINR > \beta$ \eqref{eq:sinrlbspecial1}) only if $P_{\sfn(\sfm)} > \ell(d_0)^{-1}\beta\sfN$. The average power constraint of $M$, then implies that the probability of transmission $p_{\sfn(\sfm)}(t) = \bbE\{\1_{\sfn(\sfm)}(t)\} \le M (\beta \sfN)^{-1}\ell(d_0) $. Therefore, the expected delay is at least $\bbE\{\frac{1}{p_{\sfn(\sfm)}}\} =  \frac{\beta \sfN}{M}\bbE\{\frac{1}{\ell(d_0)}\}\propto 1/(\pi \lambda)^{\alpha/2}$ from Proposition \ref{prop:nnpdf}. Even though it is tempting to claim that removing fading ($h_{\sfn(\sfm)}\equiv 1$) only decreases the expected delay, however, it is false. 

The main idea to derive the correct proof for  \eqref{eq:newlb1} remains similar, however, instead of removing fading ($h_{\sfn(\sfm)}\equiv 1$), 
we consider a stronger (stochastically dominating) fading distribution than the $EXP(1)$. Thereafter, we condition on $d_0$, and then bound the expected delay when the BS can use the optimal power transmission strategy (Proposition \ref{prop:waterfill}) depending on the 'stronger' fading gains $h'
_{\sfn(\sfm)}$. Finally, the result is obtained by taking the expectation with respect to $d_0$. 
In the high-BS density regime, interference is the key bottleneck and cannot be ignored for finding meaningful lower bounds on the expected delay. The interference seen at any mobile node depends 
on the transmission strategies of other BSs, that can be correlated across time and space.
Thus, the lower bound in the high-BS density is far more complicated to derive, since we have to argue for all possible BS strategies across all BSs.

To lower bound the expected delay in the high-BS density regime, we have to control the joint distribution 
$\bbP\left(\SINR_\sfm(1) < \beta, \dots, \SINR_\sfm(n) < \beta\right)$. Unfortunately, because of potentially correlated BS strategies $\bbP\left(\SINR_\sfm(1) < \beta, \dots, \SINR_\sfm(n) < \beta\right) \ne \prod_{t=1}^n\bbP\left(\SINR_\sfm(t) < \beta\right)$. A common work around (e.g. when BSs use ALOHA type strategies) is to condition on the BS and the MU location processes, $\Phi, \Phi_R$, and then use the conditional independence 
\begin{align}\nn
\bbP\left(\SINR_\sfm(1) < \beta, \dots, \SINR_\sfm(n) < \beta | \Phi, \Phi_R \right) &=  \\  \label{eq:rembott}  \prod_{t=1}^n\bbP\left(\SINR_\sfm(t) < \beta | \Phi, \Phi_R\right).& 
\end{align} Even \eqref{eq:rembott} is not true when the BS strategies are temporally and spatially correlated as is the case here. Thus, we need to make few more enhancements that increase the SINR (and decrease the expected delay) for which \eqref{eq:rembott} is true allowing us analytical tractability on the expected delay.
We make the following three enhancements.
\begin{enumerate}
\item For the typical BS-MU link, we let $\ell(d_0) = 1$. Since $\ell(d_0) \le 1$, clearly, this can only improve the expected delay seen at $\sfm$.
\item For all the non-typical BS $z$, we also let $\ell(d_{zu}) = 1$ for all MUs $u$ being served by $z$. 
\item Completely remove the interference seen at any MU $u$ served by a non-typical BS $z$.
\end{enumerate}

Enhancement (1) ($\ell(d_0) = 1$) is made to completely eliminate the need for the BS strategy to depend on the distance to its MU, since the path-loss function is at most $1$. The possibility of arbitrarily large $d_0$ was the limitation for the ALOHA strategy that was exploited to show its unbounded expected delay in Corollary \ref{cor:ALOHA1}, which is avoided by using $\ell(d_0) = 1$ for deriving a lower bound over all strategies.

Enhancements (2) and (3) increase the signal power and reduce the interference power seen at a non-typical MU $u$, respectively. Thus, to get the same expected delay performance possible without the enhancements (2) and (3), under enhancements (2) and (3) the power transmitted by BS $z$ is stochastically dominated while serving $u$. This in turn, reduces the interference power seen at the typical MU $\sfm$. More importantly, the power transmission strategy of any non-typical BS $z$ does not depend on $d_{zu}$ with $\ell(d_{zu}) = 1$ or any other BS in the network because of no interference. 
Thus, enhancement (2) and (3) together makes the power transmission strategy for all non-typical BSs only depend on the fading gains between BS $z$ and $u$, that are independent across space and time.

Thus, under the three enhancements, we have that
\begin{align*}
\bbP\left(\tilde {\SINR}_\sfm(1) < \beta, \dots, \tilde {\SINR}_\sfm(n) < \beta | \Phi, \Phi_R\right) &= \\ \prod_{t=1}^n\bbP\left(\tilde {\SINR}_\sfm(t) < \beta | \Phi, \Phi_R\right)&,
\end{align*}
where the superscript $\tilde{} \ $ represents the enhanced SINRs.
Using this independence of power transmission strategies from all the BSs that contribute interference at the typical MU,
we then bound the expected delay via the tail probability (joint distribution of consecutive failure events) by considering only the interferers (BSs other than $\sfn(\sfm)$) that lie inside a disc of unit radius around the typical MU $\sfm$. Essentially, as the BS density $\lambda$ increases, the nearest BS distance decreases but the signal power saturates at some level, while the interference keeping growing sufficiently with the increase in the number of interferers in the unit disc, and the temporal correlation of SINRs is strong enough even under these three enhancements.

One question that is of immediate interest is: can the expected delay be decreased if small-scale BS coordination is allowed. We answer this in the negative in the following Corollary.
\begin{cor}\label{cor:bscord} When small scale BS coordination is allowed, i.e., if a fixed number (independent of BS density) of nearest BSs can schedule their transmissions jointly, then for any BS strategy
\begin{equation}\label{eq:newlb1c}\bbE\{D\}\ge \left(\frac{c_1}{\lambda^{\frac{\alpha}{2}}}\right),
\end{equation} as well as 
\begin{equation}\label{eq:newlb2c}\bbE\{D\} \ge  c_9\left(\exp(c_2\lambda)\right).
\end{equation}
Consequently, the network wide capacity is 
 $$C = \frac{\lambda}{\bbE\{D\}} \le  \min\left\{\frac{\lambda^{\frac{\alpha}{2}+1}}{c_1}, \frac{\lambda\exp(-c_2\lambda)}{c_9}\right\}.$$ 
\end{cor}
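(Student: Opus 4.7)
The plan is to reduce the small-scale-coordination setting to the local-strategy setting of Theorem \ref{thm:lb} by showing that jointly scheduling a \emph{fixed} number $k$ of nearest BSs can only improve performance by a $\lambda$-independent constant factor, which gets absorbed into the constants of the bounds.

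For the low-BS-density bound \eqref{eq:newlb1c}, I would start from the same ``no interference'' reduction used to derive \eqref{eq:newlb1}: since the bound in Theorem \ref{thm:lb} already ignores interference entirely, the only way coordination can help is by allowing the $k$ nearest BSs to jointly boost the signal at $\sfm$ (e.g.\ via coherent combining). Each coordinating BS is still subject to its individual average-power budget $M$, so the aggregate coherent gain is at most a factor $k$, a constant independent of $\lambda$. The argument then proceeds exactly as for \eqref{eq:newlb1}: condition on the nearest-BS distance $d_0$ (whose distribution from Proposition \ref{prop:nnpdf} is unaffected by which strategy is used), stochastically dominate the fading, and integrate. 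The scaling $\lambda^{-\alpha/2}$ is preserved, so \eqref{eq:newlb1c} holds with a possibly smaller constant $c_1$.

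For the high-BS-density bound \eqref{eq:newlb2c}, I would apply the three enhancements (1)--(3) of the proof of Theorem \ref{thm:lb} only to the BSs \emph{outside} the cluster of $k$ nearest BSs to $\sfm$, and additionally grant the $k$ cluster BSs unlimited transmit power (only when serving $\sfm$) so that their contribution to the signal is maximal; none of these changes can increase $\bbE\{D\}$. After the modification, the interference at $\sfm$ comes only from BSs in $\Phi$ outside the cluster, and because of enhancements (2)--(3), each such interferer's transmission decision depends only on its own link fading (i.i.d.\ across space and time). Hence, conditioned on $(\Phi,\Phi_R)$, the enhanced failure events $\{\tilde{\SINR}_\sfm(t) < \beta\}$ are independent across $t$, exactly as in \eqref{eq:rembott}, and the tail-bound step of the original proof goes through. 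Since $k$ is fixed, the radius of the $k$-cluster is $O(\lambda^{-1/2})$, so the annulus of interferers at distance between the $(k{+}1)$-st nearest BS and a fixed radius (say $1$) still contains $\Theta(\lambda)$ BSs; the temporal-correlation argument then yields $\bbE\{D\}\ge c_9 \exp(c_2\lambda)$, with a possibly different constant $c_9$.

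The main obstacle is convincing oneself that cluster coordination cannot be exploited to cancel interference at $\sfm$ from non-coordinating BSs, e.g.\ by having the $k$ cluster BSs transmit nulling waveforms. This does not arise here because the SINR model \eqref{eq:SINR} is power-based (it sums $|\cdot|^2$ contributions with an i.i.d.\ phase/amplitude fading term) rather than coherent-addition-of-fields; the $k$ cluster BSs can only add to the signal power at $\sfm$, and any power they spend toward $\sfm$ counts against their average-power budget, which caps the aggregate benefit at a factor $k$. Combining the two lower bounds and dividing the density $\lambda$ by them gives the claimed capacity upper bound.
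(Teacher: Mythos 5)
Your overall skeleton matches the paper's: for \eqref{eq:newlb1c} the interference is already ignored so coordination changes nothing, and for \eqref{eq:newlb2c} you exclude the $k$ nearest BSs from the interference sum and observe that a unit disc around $\sfm$ still contains $\Theta(\lambda)$ interferers (the paper formalizes your ``cluster radius is $O(\lambda^{-1/2})$'' remark by replacing $d_0$ with the $k$-th nearest-BS distance $d_k$ and integrating against the PDF in Proposition \ref{prop:nnkpdf}). Your low-density argument is fine, and in fact slightly more careful than the paper's, since you explicitly bound the coherent signal gain of the cluster by a $\lambda$-independent constant via the per-BS power budgets.

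There is, however, a genuine gap in your high-density argument: you ``grant the $k$ cluster BSs unlimited transmit power (only when serving $\sfm$) so that their contribution to the signal is maximal.'' Taken literally, this enhancement makes the enhanced system trivial: with unbounded signal power the per-slot failure probability $\bbP(\tilde{\SINR}_\sfm(t)<\beta)$ can be driven to $0$, so the enhanced $\bbE\{D\}$ is $1$ and no bound of the form $c_9\exp(c_2\lambda)$ can be extracted from it. The entire exponential lower bound in Appendix \ref{app:expdelaylb2} hinges on the serving link being subject to the average power constraint $M$ (via Proposition \ref{prop:waterfill}, which produces the $\exp(-\delta)$ transmission-probability factor and forces failure whenever $\tilde{I}(t)>1$); an enhancement that removes that constraint on the signal side removes the very tension the bound exploits. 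You cannot be ``maximally generous'' in a way that collapses the quantity being lower-bounded to its trivial minimum and still claim a nontrivial bound on the original system. The paper avoids this by interpreting coordination purely as interference cancellation --- at best the $k-1$ nearest BSs are deleted from the interference term, and they are \emph{not} allowed to add to the signal (joint transmission is deferred to the CoMP discussion, which is reduced to the multiple-antenna lower bound). Your argument is repairable: replace ``unlimited power'' by the same constant-factor cap you already use in the low-density case (aggregate coherent gain at most $k^2 M$ from the individual budgets), which only shifts $\delta$ and the constants while leaving the $\exp(c_2\lambda)$ growth of the interference term intact.
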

The proof is provided in Appendix \ref{app:cor:bscord}. 
Lower bound \eqref{eq:newlb1c} is identical to \eqref{eq:newlb1}, since \eqref{eq:newlb1} is derived by completely ignoring interference, and BS coordination has no effect when there is no interference. For the high-BS density regime too, the lower bound \eqref{eq:newlb2c} is similar to \eqref{eq:newlb2} except for constants, since even with coordination (where interference from at most a fixed number of interferers can be canceled), there are sufficient number of interferers in the unit disc as the BS density grows. 

\section{Achievability}
In this section, we consider a simple BS strategy (proposed by us in \cite{VazeIyerAAP2016}) to achieve the capacity upper bound (Theorem \ref{thm:lb}) upto the same order.

{\bf Strategy:} Let for MU $u$, the distance to its nearest BS $\sfn(u)$ be $d_u$. Let each BS know $d_u$ for all the users connected to it. As usual, each BS serves all its connected users in a round-robin fashion by equally splitting the total time or bandwidth. For a slot $t$ designated for a particular MU $u$, BS $\sfn(u)$ transmits with probability $p_{\sfn(u)}(t)$ with power $P_{\sfn(u)}(t)$ given by
\begin{equation}
P_{\sfn(u)}(t) = c \ell^{-1}(d(u)), \qquad p_{\sfn(u)}(t) = M (P_{\sfn(u)}(t))^{-1},
\label{eq:powercontrol}
\end{equation}
where $c = M (1-\ep)^{-1}$, $0< \epsilon <1, \beta \gamma (1-\ep) < 1$ is a constant, and $M = P_{\sfn(u)}(t)p_{\sfn(u)}(t)$ is the average
power constraint. Condition $\beta \gamma (1-\ep) < 1$ is technical and is required for ensuring certain function is $<1$ in \eqref{eqn:bound_cond_exp_int}. Note that $p_{\sfn(u)}(t) \leq 1 - \ep$, since $\ell(d_u) \leq 1$. Thus, in each time slot, with the proposed strategy, each BS makes transmission attempts with transmission power proportional to the
distance to the MU it is serving, to completely nullify the path-loss. The transmission
probability is chosen so as to satisfy the average power constraint of $M$.
It is worthwhile noting that the strategy does not use the knowledge of fading gain $h_{\sfn(\sfm)}(t)$, and is not an adaptive strategy.

\begin{remark} Note that for the proposed strategy \eqref{eq:powercontrol}, the expected power $\bbE\{P_{\sfn(u)}(t)\} < \infty$ and expected probability of transmission $\bbE\{p_{\sfn(u)}(t)\} > 0$ for any BS $n(\sfu)$ from Proposition \ref{prop:nnpdf}.
\end{remark}

From hereon, for analysis purposes, we focus on a typical MU $\sfm$ and its nearest BS $\sfn(\sfm)$, and derive an upper bound on the expected delay seen at $\sfm$.

\begin{thm}\label{thm:ub} The power control strategy \eqref{eq:powercontrol}
achieves the following performance
$$\bbE\{D\} \le \sqrt{c_5 \left(1+ \frac{\Gamma(\alpha+1) }{(\pi \lambda)^\alpha}\right) \exp \left(c_4\lambda\right)},$$ 
where $c_3$ and $c_4$ are constants.
Thus,
$$
C(\cS) \ge \min\left\{\frac{\lambda^{\frac{\alpha}{2}+1}}{c_6}, c_7\lambda\exp(-c_4\lambda)\right\},$$
\end{thm}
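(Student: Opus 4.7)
The plan is to exploit the static nature of the strategy to reduce $\bbE\{D\}$ to a per-slot success probability analysis, and then bound that probability via the PPP Laplace functional.

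First, I would observe that the strategy \eqref{eq:powercontrol} is \emph{static}: $p_{\sfn(u)}(t)$ and $P_{\sfn(u)}(t)$ depend only on the time-invariant distance $d_u$. Hence, conditional on $(\Phi,\Phi_R)$, the indicators $\1_z(t)$ and fades $h_z(t)$ are i.i.d.\ across $t$ and mutually independent across $z$, so the events $\{\SINR_{\sfn(\sfm),\sfm}(t)>\beta\}$ are conditionally i.i.d.\ across $t$. Therefore $D\mid(\Phi,\Phi_R)$ is geometric with parameter $P_s:=\bbP(\SINR_{\sfn(\sfm),\sfm}(1)>\beta\mid\Phi,\Phi_R)$, giving $\bbE\{D\mid\Phi,\Phi_R\}=1/P_s$. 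Applying Jensen's inequality to the convex map $x\mapsto x^2$ then yields
\begin{equation*}
\bbE\{D\}=\bbE\{1/P_s\}\le\sqrt{\bbE\{1/P_s^{2}\}},
\end{equation*}
so the task reduces to upper bounding $\bbE\{1/P_s^{2}\}$.

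Next, using $P_{\sfn(\sfm)}\ell(d_0)=c$ and the exponential law of $h_{\sfn(\sfm)}$, a standard Laplace-transform computation and $p_zP_z=M$, $P_z=c/\ell(d_{u_z})$ give
\begin{equation*}
P_s=p_{\sfn(\sfm)}\,e^{-\beta\sfN/c}\!\prod_{z\in\Phi\setminus\{\sfn(\sfm)\}}\!\!\bigl(1-\xi(z)\bigr),\qquad \xi(z)=\frac{\beta\gamma M\ell(z)/c}{1+\beta\gamma\ell(z)/\ell(d_{u_z})}.
\end{equation*}
The uniform bound $\xi(z)\le\beta\gamma M\ell(z)/c\le\beta\gamma(1-\ep)<1$ has two key payoffs: (i) the envelope is free of the interferer-to-MU distances $d_{u_z}$, so no integration against $\Phi_R$ is needed; and (ii) the elementary estimate $1-x\ge\exp(-c'x)$ on $[0,\beta\gamma(1-\ep)]$, valid with $c'=(1-\beta\gamma(1-\ep))^{-1}$, yields $P_s\ge p_{\sfn(\sfm)}\,e^{-\beta\sfN/c}\exp(-c''B)$ where $B:=\sum_{z\in\Phi\setminus\{\sfn(\sfm)\}}\ell(z)$ and $c''=c'\beta\gamma M/c$. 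Squaring and using $1/p_{\sfn(\sfm)}=(c/M)\ell(d_0)^{-1}\le(c/M)(1+d_0^{\alpha})$ gives $1/P_s^{2}\le(\text{const})\cdot(1+d_0^{\alpha})^{2}\exp(2c''B)$.

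The final step is the expectation. Conditioning on $d_0$, by Slivnyak's theorem the remaining points $\Phi\setminus\{\sfn(\sfm)\}$ form a PPP of intensity $\lambda$ on $\bbR^{2}\setminus B(0,d_0)$, so the PPP Laplace functional gives
\begin{equation*}
\bbE\{e^{2c''B}\mid d_0\}=\exp\!\Bigl(\lambda\!\int_{|z|>d_0}\!(e^{2c''\ell(z)}-1)\,dz\Bigr)\le \exp(c_4\lambda),
\end{equation*}
with $c_4=\pi(e^{2c''}-1)+2\pi\!\int_1^\infty\! r(e^{2c''r^{-\alpha}}-1)\,dr<\infty$ since $\alpha>2$ makes the tail integrable. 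Because this majorant is independent of $d_0$, it factors out, and I am left with $\bbE\{(1+d_0^{\alpha})^{2}\}$. Proposition~\ref{prop:nnpdf} gives $\bbE\{d_0^{k}\}=(\pi\lambda)^{-k/2}\Gamma(k/2+1)$, and AM--GM on the cross term yields $\bbE\{(1+d_0^{\alpha})^{2}\}\le 2\bigl(1+\Gamma(\alpha+1)/(\pi\lambda)^{\alpha}\bigr)$. Assembling everything produces the target bound $\bbE\{1/P_s^{2}\}\le c_5\bigl(1+\Gamma(\alpha+1)/(\pi\lambda)^\alpha\bigr)\exp(c_4\lambda)$, and substituting into the Jensen step yields the claimed estimate for $\bbE\{D\}$; the capacity lower bound then follows from $C(\cS)=\lambda/\bbE\{D\}$ by splitting the bound into the polynomial small-$\lambda$ and exponential-decay large-$\lambda$ regimes.

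The main obstacle is the joint dependence of $d_0$ and $B$ on the same point process $\Phi$: a naive Cauchy--Schwarz between these two factors would introduce spurious $\lambda$-scalings in the polynomial piece. Resolving this via conditioning on $d_0$ and invoking the Palm structure of $\Phi$ is what allows the conditional Laplace functional to be dominated uniformly in $d_0$ and hence to factor out cleanly. A secondary concern would have been handling the interferer-to-MU distances $d_{u_z}$, but the uniform bound $\xi(z)\le\beta\gamma M\ell(z)/c$ makes all dependence on $\Phi_R$ disappear, which is a fortunate structural feature of the proposed power control strategy.
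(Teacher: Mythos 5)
Your proposal is correct and follows essentially the same route as the paper's Appendix E: conditional independence of the success events given the point processes, a per-slot Laplace-transform bound made uniform over the interferers' served MUs via the average power constraint $p_zP_z\le M$, the PPP probability generating functional, and integration against the nearest-neighbour density of $d_0$. The only minor differences are that you decouple the $d_0$-dependent factor from the interference factor by bounding the conditional Laplace functional uniformly in $d_0$ (rather than the paper's Cauchy--Schwarz between the two factors), and your claim that $D$ is conditionally geometric is literally imprecise because each interfering BS may serve different MUs in different slots, but your uniform envelope $\xi(z)\le\beta\gamma M\ell(z)/c$ repairs this exactly as the paper's worst-case MU $u^*(z)$ construction does.
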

Proof is derived in Appendix \ref{app:ach}.
%
Theorem \ref{thm:ub} shows that a simple non-adaptive strategy that does not need to learn the local fading gain is capable of achieving (upto constants) the upper bound  on the capacity $C$. 
 The only local information it needs is the distance $d_0$. Recall that Theorem \ref{thm:alohalb} suggests that any strategy that can achieve non-zero capacity $C$ needs to know $d_0$. Thus, the considered power control strategy is minimal in terms of its learning requirements. Distance $d_0$ can be learned easily via ranging or RSSI measurements and the considered power control strategy can be implemented easily in practice. 

The considered power control strategy completely nullifies the path-loss seen at MU $\sfm$, and makes the received signal power independent of distance $d_0$. Essentially, Theorem \ref{thm:ub} shows that as long as signal attenuation because of the path-loss function is compensated, order-optimal expected delay and capacity $C$ can be achieved. Theorem \ref{thm:ub} shows that a single strategy is sufficient to achieve the capacity upper bound in both the low and high BS density regimes, where the capacity behaviour is quite different.
We would like to note that the proof of Theorem \ref{thm:ub} is 
similar to the one derived in \cite{VazeIyerAAP2016}, however, their the focus was only to show that the expected delay is finite, while here we need the exact dependence of the BS density on the cellular network capacity.

Here we give a back of the envelop calculation for Theorem \ref{thm:ub}. In the low-BS density regime, where the interference is weak, let interference seen at $\sfm$ be a constant $=I$. Then, conditioned on the distance $d_0$ between BS $\sfn(\sfm)$ and $\sfm$, the success probability in any slot with our strategy of $P_{\sfn(\sfm)}(t) = \ell(d_0)^{-1}$ and $p_{\sfn(\sfm)}(t) \approx d_0^{\alpha}$ in the low-BS density regime is
\begin{align*}
p_s & = p_{\sfn(\sfm)}(t) \bbP\left(\frac{h}{\sfN+I} > \beta\right), \\
&= p_{\sfn(\sfm)}(t) \exp\left(-\beta(\sfN+I)\right).
\end{align*}  Thus,
the expected delay seen at $\sfm$ with the proposed power control strategy is $\bbE\{D| d_0\} \propto 1/p_{\sfn(\sfm)}(t) = d_0^{\alpha}$. Taking the expectation with respect to $d_0$, we get $\bbE\{D\} \propto 1/(\pi \lambda)^{\alpha/2}$. In the actual proof, interference is not a constant, and we use Cauchy-Schwarz inequality to decouple the dependence between $p_{\sfn(\sfm)}(t)$ and $I$ to essentially get the same result. 
In the high-BS density regime, mostly $d_0$ is fairly small, and most of the BSs are transmitting for most of the time slots with our strategy. 
Thus, there is sufficient interference, and the intuition for the exponential increase of the expected delay with the BS density is similar to the lower bound derived in Theorem \ref{thm:lb}. One point to note is that we present a unified analysis to obtain the result of Theorem \ref{thm:ub} for both the low and the high-BS density regimes.

\section{Extensions}
We consider two potential extensions of our system model that can possibly provide with better capacity than the results presented in this paper.
\subsection{Multiple Antennas}\label{sec:mimo}
One possible remedy to improve the capacity with increasing BS density is to use multiple antennas 
\cite{Hunter2008, jindal2011multi, Vaze2009} at the BS and the MUs. We next show via a lower bound on the expected delay that the dependence of the expected delay and the capacity remains unchanged with respect to the BS density even with multiple antennas. We consider the case when each BS is equipped with $N$ antennas. Case of MUs also having multiple antennas follows similarly. For sake of brevity, we just indicate the change that multiple antennas can make to the lower bound of Theorem \ref{thm:lb}.

With $N$ antennas at each BS, that are used for beamforming, compared to the SINR expression with single antennas \eqref{eq:SINR}, the only change in the SINR is in terms of the signal power, that is now $|{\bf h}_{\sfn(\sfm)}(t)|$, a chi-square distributed random variable with $2N$ degrees of freedom.

Similar to the technique used to find the lower bound corresponding to low-BS densities in Theorem \ref{thm:lb}, we remove all the interference seen at $\sfm$, to get
\begin{equation}\label{eq:SINRmimospecial}
\SINR_{n(\sfm), \sfm}(t) = \frac{P_{\sfn(\sfm)}(t) |{\bf h}_{\sfn(\sfm)}(t)| \ell(d_0) \1_{\sfn(\sfm)}(t)}{\sfN}.
\end{equation}
Thus, similar to the idea used to derive lower bound \eqref{eq:newlb1}, 
in Theorem \ref{thm:lb}, for $\ell(d_0) \ge \frac{\beta}{M}$, we let  $|{\bf h}| \equiv 1$, and when $\ell(d_0) < \frac{\beta}{M}$, we consider a stronger channel $|{\bf h}'_{\sfn(\sfm)}(t)|$ with PDF $f_{{\bf h}'}(x) = \frac{c}{(N-1)!} x \exp\left(-\frac{x}{N}\right)$\footnote{$c$ is a normalizing constant.} compared to $|{\bf h}_{\sfn(\sfm)}(t)| \sim \chi^2_{2N}$ that has PDF $f_{\bf h}(x) = \frac{1}{(N-1)!} x^{N-1} \exp\left(-x\right)$.
The rest of the proof follows identically to the proof of \eqref{eq:newlb1}, where the resulting lower bound with multiple antennas differs only in constants compared to lower bound \eqref{eq:newlb1}.

Next, to derive bound similar to \eqref{eq:newlb2}, we consider interference, where the SINR at the typical MU $\sfm$ is 
\begin{equation}\label{eq:SINRmimo}
\SINR_{n(\sfm), \sfm}(t) = \frac{P_{\sfn(\sfm)}(t) |{\bf h}_{\sfn(\sfm)}(t)| \ell(d_0) \1_{\sfn(\sfm)}(t)}{ \ga \sum_{z \in \Phi \backslash \{\sfn(\sfm)\}} P_z(t) \1_z(t)
h_z(t) \ell(z) + \sfN}.
\end{equation}

From \eqref{eq:lbint200}, recall that for the $\sfn(\sfm)-\sfm$ link, the tail probability is
\begin{eqnarray*}
\bbP(D > n | \Phi, \Phi_R)
 &=& \prod_{t=1}^n\bbP\left(\SINR_\sfm(t) < \beta \bigl | \Phi, \Phi_R\right),
\end{eqnarray*}
The CDF of a chi-square distributed random variable $X$ with $2N$ degree of freedom is given by 
$\bbP(X \le x) = \int_0^x \frac{s^{N-1}}{(N-1)!}\exp\left(-s\right) ds$. We replace this with a stronger fading channel with PDF $f_{{\bf h}'}(x) = \frac{c}{(N-1)!} x \exp\left(-\frac{x}{N}\right)$.

With $\ell(d_0) = 1$ for the typical user $\sfm$, and dropping additive noise contribution in \eqref{eq:SINRmimo}, as before, we get 
$\bbP\left(\SINR_\sfm(t) < \beta | \Phi, \Phi_R\right)$
\begin{equation} = \bbP\left(\frac{P_{\sfn(\sfm)}(t) {\bf h}' \1_{\sfn(\sfm)}(t)}{ \ga \sum_{z \in \Phi \backslash \{\sfn(\sfm)\}} P_z(t) \1_z(t)
h_z(t) \ell(z)} < \beta \bigl | \Phi, \Phi_R\right).
\end{equation}
From hereon, identical analysis as in the proof of Theorem \ref{thm:lb}, starting from \eqref{eq:lbint200} under restriction $1$ follows, with only difference being in the value of $\delta$. Thus, even with multiple antennas, the expected delay with respect to the BS density remains the same as in Theorem \ref{thm:lb}.

\subsection{Coordinated Multi-Point} 
Another modern technique to improve the performance of cellular wireless networks, is the use of coordinated multi-point (CoMP) \cite{marsch2011coordinated}, where a BS shares its signal with multiple BSs, which then transmit it coherently. Essentially, if $k$ BSs are cooperating using CoMP, then the received signal model at the MU is equivalent to having $k$ transmit antennas at the BS it is connected to without power normalization, and in addition removing the interference from the $k-1$ coordinating other BSs. Thus, CoMP is a combination of having multiple antennas at BS, and small-scale BS coordination. Thus, using results from Corollary \ref{cor:bscord}, and Subsection \ref{sec:mimo}, it easily follows that CoMP also cannot change the order-wise dependence of the BS density on the expected delay and the capacity.

\section{Simulation Results}
In this section, we provide some numerical results to better understand the expected delay performance of cellular wireless networks. 
We consider a BS location process, where the number of BSs is Poisson 
distributed with mean $200$, and each BS lies on a two-dimensional disc of radius $r$. 
To simulate a particular BS density $\lambda$ per m$^2$, the disc radius $r$ is adjusted accordingly, and the typical MU is placed at the center of the disc. 
In all simulations, we use the path-loss exponent  $\alpha =3$,  and SINR threshold $\beta =1$ corresponding to 
$R=1$ bits/sec/Hz, and average power constraint $M=1$. 

In Fig. \ref{fig:pcvsnopc}, we begin by plotting the expected delay of the power control strategy \eqref{eq:powercontrol} in the low-BS density regime. As derived, the expected delay closely follows the polynomial lower bound of $1/\lambda^{\alpha/2}$ in the low-density regime.
In Fig. \ref {fig:pcvsnopcHD}, we plot the expected delay with the power control strategy for higher BS densities, where as expected, the expected delay grows exponentially closely following the derived lower and upper bounds.

In Fig. \ref{fig:BScoop}, we illustrate the expected delay for an ALOHA strategy with BS coordination, where to emulate BS coordination, interference from the nearest $K$ BSs is neglected at the typical MU located at the origin. We see that the delay performance improves, but not order-wise. Finally, in Fig. \ref{fig:comp}, we simulate the performance of CoMP, where for the ALOHA strategy, BS coordination (similar to Fig. \ref{fig:BScoop}) together with multiple antennas at BS is employed. Each BS has $N$ antennas that it uses for beamforming towards its MU. Once again, as discussed earlier, we see that there is no substantial improvement in expected delays with employing multiple antennas at each BS.

\begin{figure}
\centering
\includegraphics[width=3.75in]{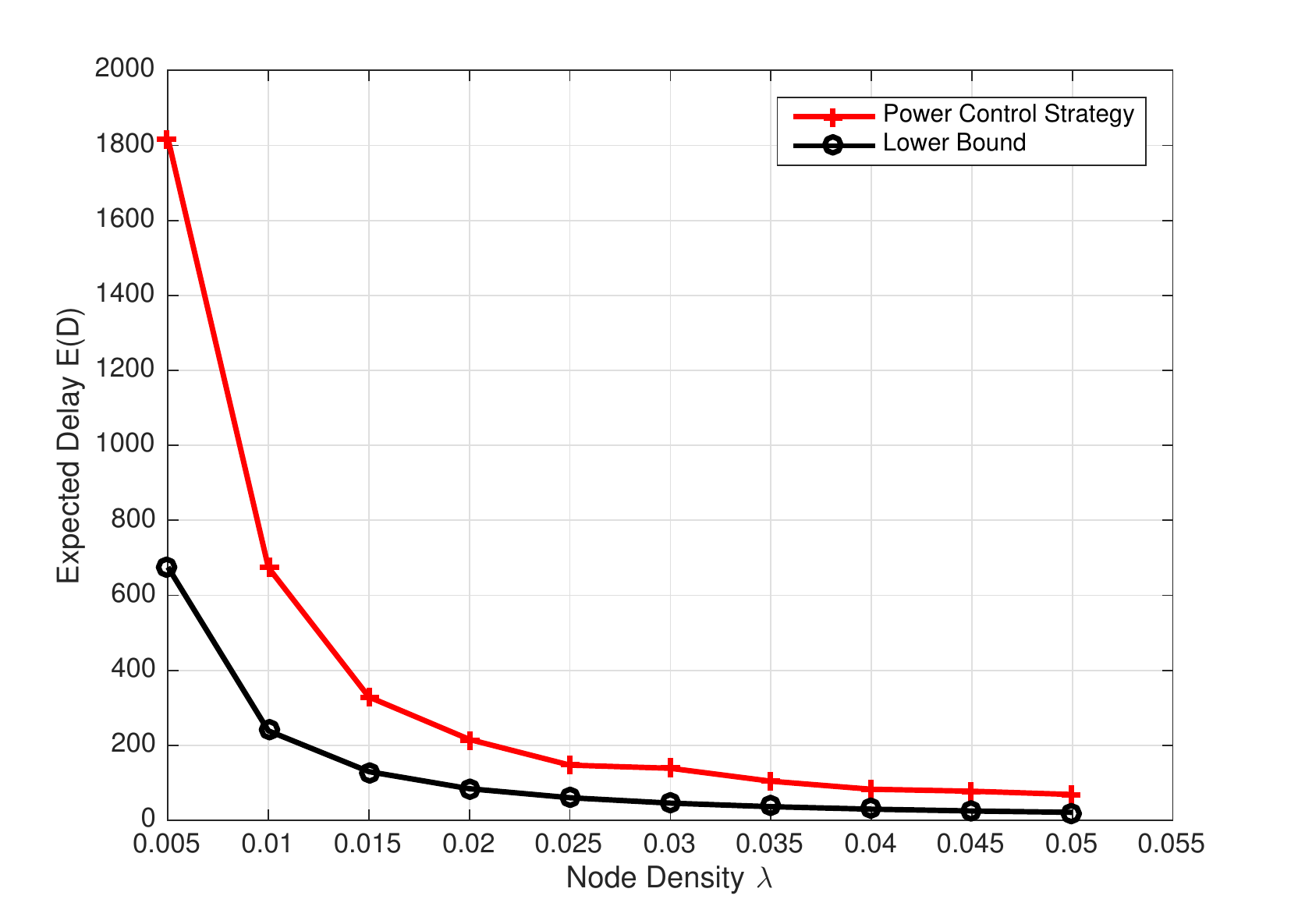}
\caption{Comparison of the expected delay of power control strategy for low-BS densities and the lower bound.}
\label{fig:pcvsnopc}
\end{figure}

\begin{figure}
\centering
\includegraphics[width=3.75in]{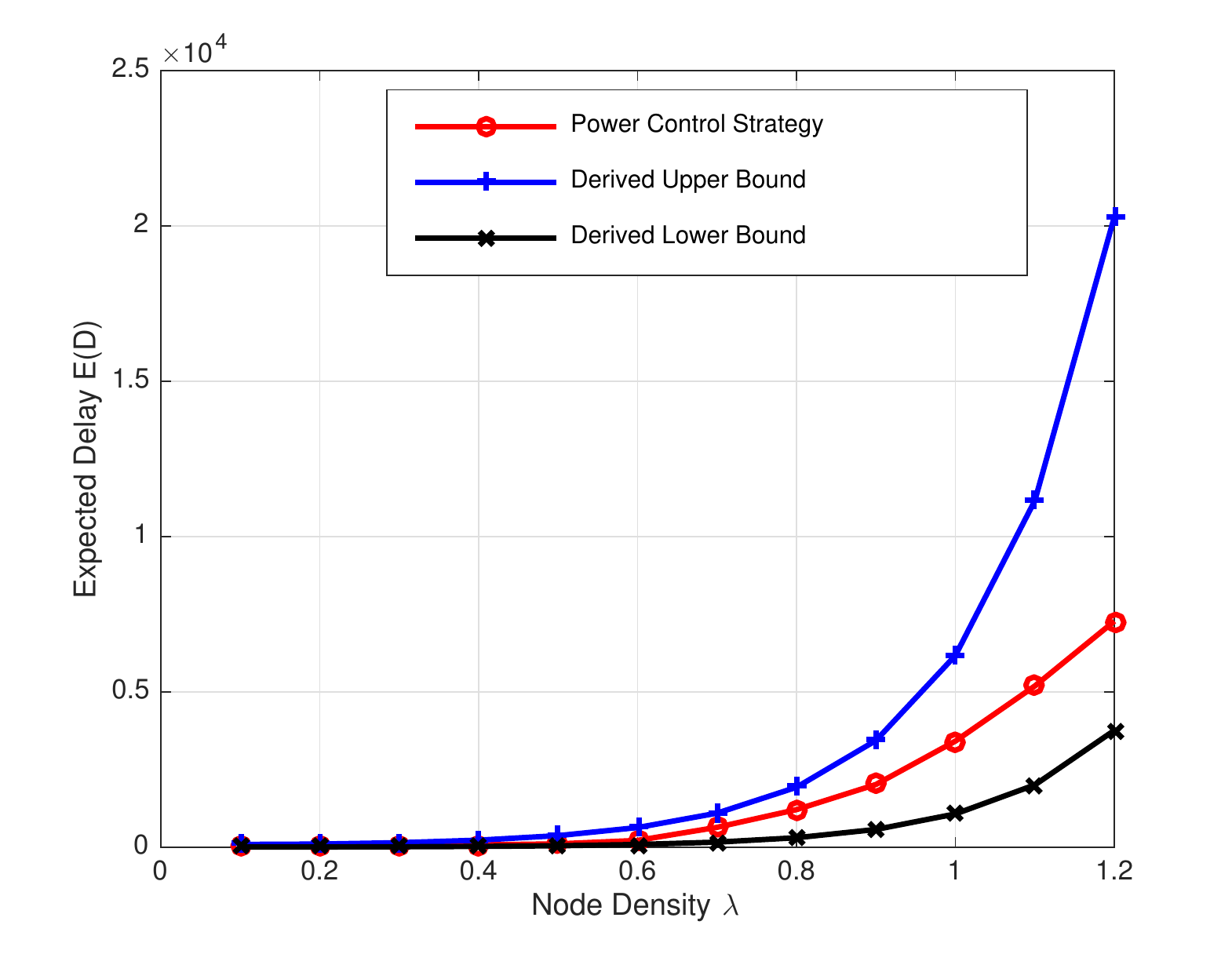}
\caption{Expected delay of the power control strategy for higher BS densities referenced with the lower and the upper bound.}
\label{fig:pcvsnopcHD}
\end{figure}

\begin{figure}
\centering
\includegraphics[width=3.75in]{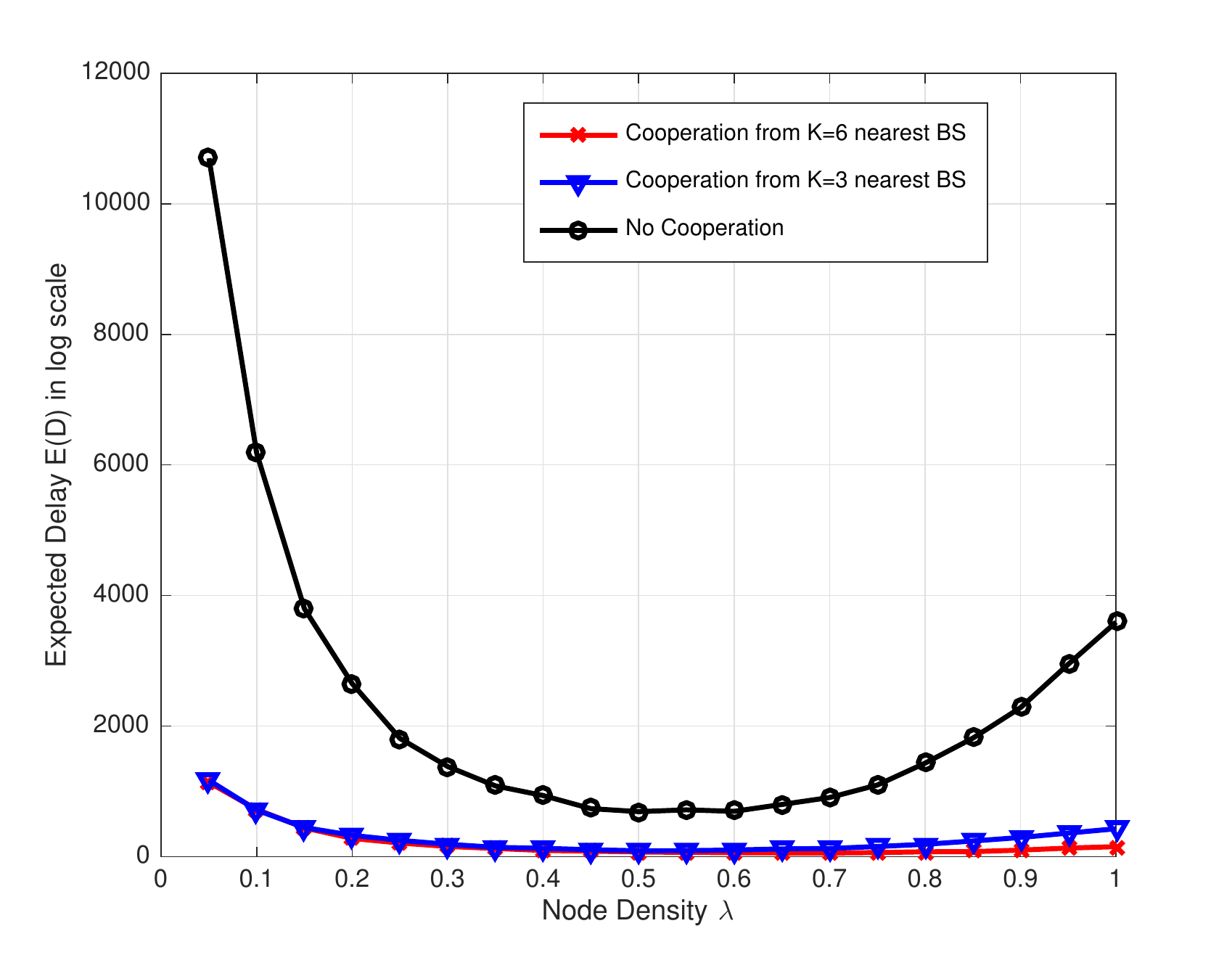}
\caption{Comparison of the expected delay of the ALOHA strategy with BS cooperation, where $7$ closest BSs are not transmitting.}
\label{fig:BScoop}
\end{figure}

\begin{figure}
\centering
\includegraphics[width=3.75in]{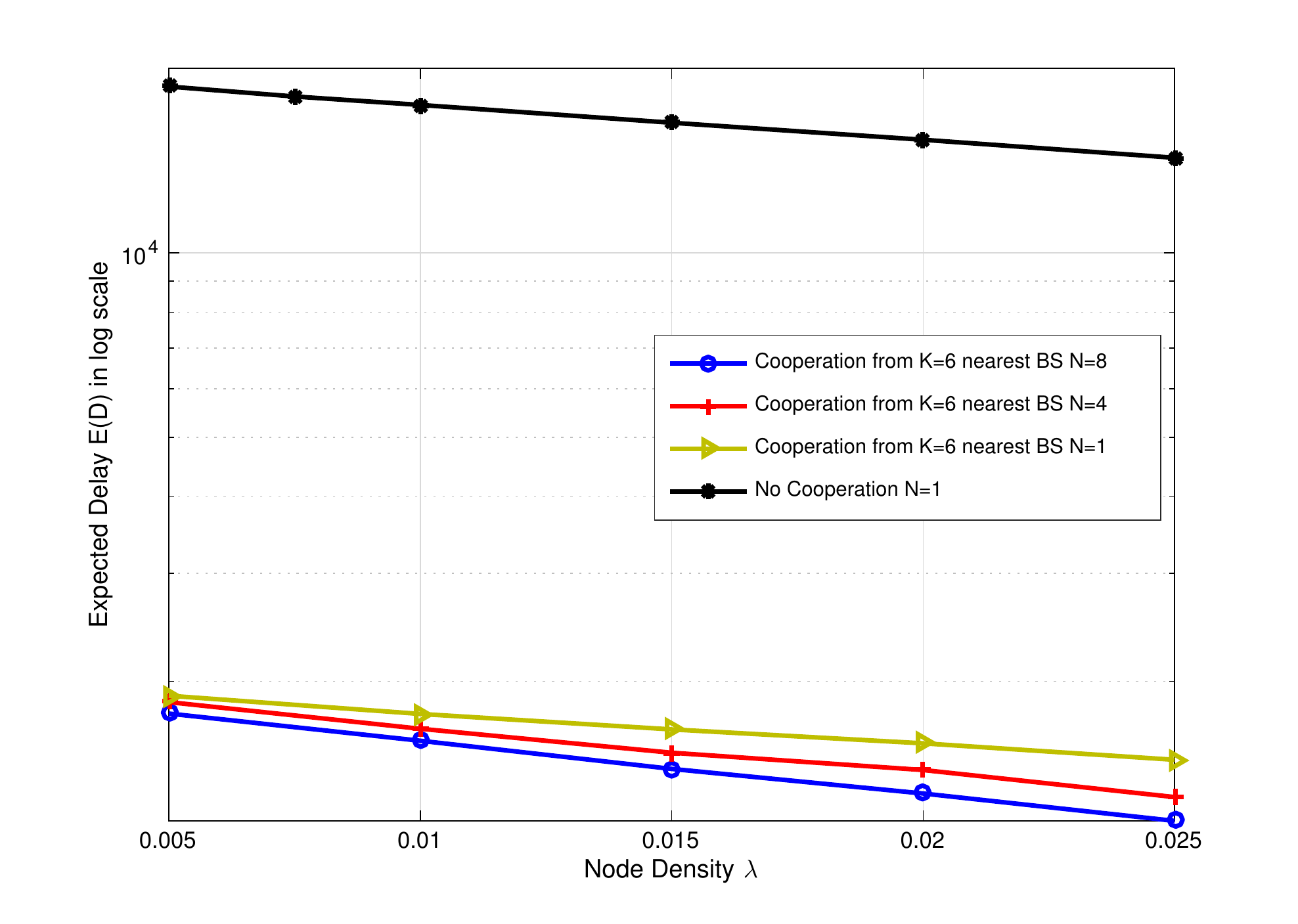}
\caption{Comparison of the expected delay of the ALOHA strategy with CoMP, BS cooperation where $7$ closest BSs are not transmitting, and there are $N$ antennas at the BS.}
\label{fig:comp}
\end{figure}

\section{Conclusion}
In this paper, we derived tight bounds for a natural and practical definition of capacity for cellular wireless networks. 
Even though our capacity metric is weaker than the Shannon capacity, however, it closely matches the throughput measure  
observed in a real-life implementation of cellular wireless networks, where  ARQ is implemented universally. Most importantly, we 
were able to derive the exact dependence of the BS density on the capacity of cellular networks, 
which has generally escaped analytical tractability. Conventional wisdom suggests that there is a advantage in increasing BS density; via
increasing the SINR for the cell-edge users or improving the frequency reuse. Our work, exactly characterizes the capacity behaviour as a function of the BS density. For low-BS densities, the capacity increases polynomially with the BS density, while as BS density is increased further the capacity starts to decrease exponentially.
The prior conclusions were drawn depending on the one-shot connection probability, however, in real-life cellular networks use ARQ. We fully incorporate the effects of ARQ, that are more relevant in the high-BS density regime, where there is a strong positive temporal correlations in SINRs that severely degrades the performance as the BS density is increased. 
Typically, in literature one finds performance (capacity or connection probability) analysis of a fixed BS strategy. To benchmark the performance limits of cellular wireless networks, in this paper, we also upper bounded the capacity 
over all BS strategies that are implementable in practical cellular networks. We also show that the upper bound is achievable by a simple BS strategy, that only depends on the distance between the BS and the MU, that we proposed in our earlier work.

\section{Acknowledgements} 
The paper has benefited immensely because of critical comments and suggestions by Chandra Murthy, Sibi Raj B Pillai, and P.R. Kumar. 
\bibliographystyle{IEEEtran}
\bibliography{IEEEabrv,Research}
\appendices
\section{Proof of Lower Bound on Expected Delay \eqref{eq:newlb1}}\label{app:expdelaylb}
To prove \eqref{eq:newlb1}, we need the following result from \cite{caire1999optimum} on the optimal power allocation to minimize outage probability under an average power constraint.

\begin{proposition}\label{prop:waterfill} \cite{caire1999optimum} Under average power constraint of $M$, to minimize the outage probability $\bbP(h_i P_i \le \gamma \beta)$, where $h_i$ is i.i.d. with PDF $f$, the optimal power allocation 
$$ P_ i = \begin{cases} 
(\ga \beta) / h_i & h_i \ge \delta, \\
0 & h_i < \delta,
\end{cases}
$$
where $\delta$ is chosen to satisfy the average power constraint $\int_{x\ge \delta} P_i(x) f(x) dx \le M$, and where lower average power constraint implies larger $\delta$, and vice versa. 
\end{proposition}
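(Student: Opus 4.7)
The plan is to cast the problem as the infinite-dimensional program of minimizing $\bbE\{\1(h P(h) \le \gamma\beta)\}$ over measurable $P(\cdot)\ge 0$ subject to $\bbE\{P(h)\}\le M$, and to attack it by a Lagrangian / Neyman--Pearson style pointwise minimization. For a multiplier $\mu\ge 0$, introduce $L(P,\mu) = \bbE\{\1(h P(h) \le \gamma\beta) + \mu P(h)\}$ and observe that a policy which minimizes $L(\cdot,\mu)$ for some $\mu$ that makes the power budget bind is automatically a global minimizer of the original constrained problem, by standard complementary slackness. Thus it suffices to minimize the integrand $g(P;h) := \1(hP\le \gamma\beta) + \mu P$ over $P\ge 0$ at each realization $h$.

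The next step is a case analysis on $g(\cdot;h)$: (i) $P = 0$ has cost $1$; (ii) any $P\in(0,\gamma\beta/h)$ has cost $1+\mu P > 1$, dominated by (i); (iii) $P = \gamma\beta/h$ has cost $\mu\gamma\beta/h$; (iv) any $P > \gamma\beta/h$ has cost $\mu P > \mu\gamma\beta/h$, dominated by (iii). Hence at each $h$ the pointwise optimum lies in $\{0,\gamma\beta/h\}$, and the "on" choice is preferred exactly when $\mu\gamma\beta/h\le 1$, i.e., $h\ge \mu\gamma\beta$. Setting $\delta := \mu\gamma\beta$ recovers precisely the threshold form claimed in the proposition.

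Finally, I would pin $\delta$ down from the constraint: the candidate expected power $G(\delta):=\int_{\delta}^\infty (\gamma\beta/h) f(h)\,dh$ is continuous and nonincreasing in $\delta$, tending to $0$ as $\delta\to\infty$, so whenever $G(0^+)\ge M$ there exists $\delta^\star$ with $G(\delta^\star)=M$; otherwise the constraint is inactive and $\delta=0$. Monotonicity of $G$ also immediately yields the claimed comparative-statics: smaller $M$ forces larger $\delta^\star$. The main subtleties are (a) the Neyman--Pearson boundary at $h=\delta^\star$, which is harmless because $f$ is a continuous density (so $\bbP(h=\delta^\star)=0$), and (b) certifying global rather than merely pointwise optimality, which follows because at the binding multiplier $\mu^\star$ the Lagrangian value of any feasible $P'$ satisfies $\bbE\{\1(hP'\le \gamma\beta)\} + \mu^\star \bbE\{P'\} \ge L(P^\star,\mu^\star) = \bbE\{\1(hP^\star\le \gamma\beta)\} + \mu^\star M$, and using $\bbE\{P'\}\le M$ gives the desired inequality on the original objective.
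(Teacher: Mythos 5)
The paper does not prove this proposition at all --- it is imported verbatim from the cited reference on optimum power control over fading channels --- so there is no internal proof to compare against. Your Lagrangian argument is the standard one for this result and is essentially correct: the pointwise minimization of $\1(hP\le \gamma\beta)+\mu P$ correctly reduces the optimum to the two-point set $\{0,\gamma\beta/h\}$ with the threshold $\delta=\mu\gamma\beta$, the monotonicity of $G(\delta)=\int_\delta^\infty(\gamma\beta/h)f(h)\,dh$ pins down $\delta$ and gives the comparative statics, and the weak-duality step correctly certifies global optimality without any convexity assumption (which matters here, since the objective is an indicator and the problem is not convex in $P$).

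One technicality is worth flagging, though it is shared with the statement itself rather than being a defect of your argument alone. With the outage event written as $h_iP_i\le\gamma\beta$ (non-strict), the choice $P=\gamma\beta/h$ still lands \emph{in} outage, so your case (iii) cost should read $1+\mu\gamma\beta/h$ and the infimum over $P>\gamma\beta/h$ is not attained. The proposition implicitly adopts the convention that $hP=\gamma\beta$ counts as success (equivalently, outage is $hP<\gamma\beta$), under which your case analysis is exact; you should state that convention explicitly rather than silently switching the inequality between cases (ii) and (iii). You might also note that for the fading densities actually used downstream in the paper (e.g.\ $f(x)=e^{-x}$ or $f(x)=xe^{-x}$), $G(0^+)=\infty$ or is finite respectively, so your existence argument for a binding $\delta^\star$ genuinely needs both branches of the dichotomy you state.
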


Armed with Proposition \ref{prop:waterfill}, we now proceed to prove \eqref{eq:newlb1}. Consider that other than BS $\sfn(\sfm)$ there is no other active BS in the network. Thus, the interference seen at $\sfm$ is zero. Then the SINR seen at $\sfm$ in any time slot $t$
\begin{equation}\label{eq:sinrlbspecial}
\SINR_{\sfn(\sfm), \sfm}(t) = \frac{P_{\sfn(\sfm)} h_{\sfn(\sfm)} \1_{\sfn(\sfm)}(t) \ell(d_0)}{\sfN}.
\end{equation}
Given a realization of the BS PPP $\Phi$, the distance $d_0$ is fixed. Given $\Phi$, we will find a lower bound on $\bbE\{D| \Phi\}.$ 
With fixed $M, \beta$ for two different cases depending on $d_0$, we will consider stronger fading gains to lower bound the outage probabilities $\bbP(\SINR \le \beta | \Phi)$, and consequently lower bound the expected delay, as follows. For ease of exposition, we absorb $\sfN$ in $\beta = \beta \sfN$.

Case 1: $\ell(d_0) \ge \frac{\beta}{M} $. For this case, we first show that the outage probability $p_o= \bbP(\SINR \le \beta | \Phi)$ (for SINR given by \eqref{eq:sinrlbspecial}) with $h\sim EXP(1)$ is larger than when $h\equiv 1$ always (no fading/line of sight). 
With $h\sim EXP(1)$, from Proposition \ref{prop:waterfill}, for any $M$, the outage probability $p_o= \bbP(\SINR \le \beta | \Phi )$ is at least $1-\exp(\delta)$, where $\delta$ is chosen to satisfy the average power constraint 
$ \int_\delta^\infty \frac{\beta}{ \ell(d_0) x}  \exp(-x) dx = M$. With $h\equiv1$, 
under our restriction $\ell(d_0) \ge \frac{\beta}{M}$ that we began with, just by choosing $P_{\sfn(\sfm)} = \beta \ell(d_0)^{-1}$ always, we get outage probability $p_o= \bbP(\SINR \le \beta | \Phi) =0$, while satisfying the average power constraint of $M$. 

%

Case 2: $\ell(d_0) < \frac{\beta}{M}$.
To derive a lower bound in this case, we replace $h_{\sfn(\sfm)} \sim EXP(1)$ with `stronger' $h'_{\sfn(\sfm)}$ that has PDF $f_{h'_{\sfn(\sfm)}}(x)  = x\exp(-x)$ and CDF $\bbP(h'_{\sfn(\sfm)} < x) =  1-(x+1)\exp(-x)$. With $h_{\sfn(\sfm)} \sim EXP(1)$, the CDF is $\bbP(h_{\sfn(\sfm)} < x) = 1-\exp(-x)$.

With the stronger fading gain $h'_{\sfn(\sfm)}$, conditioned on $d_0$, from Proposition \ref{prop:waterfill}, with an average power constraint of $M$,
the outage probability $p_o = \bbP\left(P_{\sfn(\sfm)} h'_{\sfn(\sfm)} \1_{\sfn(\sfm)}(t) \ell(d_0) \le \beta | \Phi\right)$ is minimized when $P_{\sfn(\sfm)} = \frac{\beta}{\ell(d_0) h'_{\sfn(\sfm)}}$ for $h'_{\sfn(\sfm)} >\delta$ and $P_{\sfn(\sfm)}=0$ otherwise, where $\delta$ is such that the average power constraint 
\begin{equation}\label{eq:dummyxx}
 \int_\delta^\infty \frac{\beta}{ \ell(d_0) x}  x\exp(-x) dx = M
 \end{equation} is satisfied. Solving \eqref{eq:dummyxx}, we get that $\exp(-\delta) = \frac{M  \ell(d_0)}{\beta}$, where recall that $\frac{M  \ell(d_0)}{\beta} < 1$ from our restriction.\footnote{We needed the stronger fading gain so that \eqref{eq:dummyxx} can be solved exactly in terms of $\delta$.}
  Hence, the resulting outage probability, $p_o = \bbP(h'_{\sfn(\sfm)} \le \delta)= 1-\left(1-\ln \frac{M\ell(d_0)}{\beta}\right) \frac{M \ell(d_0)}{\beta}$, and  the success probability $p_s=1-p_o = 
\left(1-\ln \frac{M\ell(d_0)}{\beta}\right) \frac{M \ell(d_0)}{\beta}$.

 Thus, combining case 1 and 2, given $\Phi$, the expected delay is at least 
\begin{align}\nn
\bbE\{D| \Phi\} &\ge \1_{\frac{M  \ell(d_0)}{\beta} \ge 1}1 + \1_{\frac{M  \ell(d_0)}{\beta} < 1}\frac{1}{p_{s}}, \\\label{eq:newubound}
&\ge \1_{\frac{M  \ell(d_0)}{\beta} < 1}\frac{\beta \ell(d_0)^{-1}}{\left(1-\ln \frac{M\ell(d_0)}{\beta}\right)M},
\end{align}
where the first equality follows since given $\Phi$, the success events ($h'_{\sfn(\sfm)} >\delta$) are independent across time slots and $\bbE\{D| \Phi\} = 1/p_s$.
From Proposition \ref{prop:nnpdf}, we know that $f_{d_0}(y) = 2 \pi \lambda y \exp\left(-\lambda \pi y^2\right)$. Recall that $\ell(d_0) = \min\{1, d_0^{-\alpha}\}$, taking the expectation of \eqref{eq:newubound} with respect to $d_0$, we get $\bbE\{D\}$
\begin{align}\label{eq:ubounddummy1}
&\ge \frac{\beta}{M} \left(\int_{\ell(y)<\frac{\beta}{M} }
\frac{\ell(y)^{-1}}{\left(1-\ln \frac{M\ell(y)}{\beta}\right)} 2 \pi \lambda y \exp\left(-\lambda \pi y^2\right)  dy\right), \\ \label{eq:ubound1}
&\ge  \frac{\beta}{M} \frac{ c}{(\pi \lambda)^{\frac{\alpha}{2}}},
\end{align}
where $c$ is a constant. The final inequality is easy to see if the lower limit of the integration in \eqref{eq:ubounddummy1} is $0$ instead of $\ell(y)<\frac{\beta}{M}$. Even with $\ell(y)<\frac{\beta}{M}$ as the lower limit, the same dependence on $\lambda$ follows, but with constants depending on $\beta$ and $M$\footnote{Can be checked by Mathematica.}.

\section{Proof of Lower Bound on Expected Delay \eqref{eq:newlb2}}\label{app:expdelaylb2}
Now we prove the lower bound \eqref{eq:newlb2} on the expected delay while incorporating the interference that will be tight for intermediate and higher values of BS densities.
As defined before, the typical BS $\sfn(\sfm)$ follows a local strategy, i.e., it transmits with power $P_{\sfn(\sfm)}(t)$ with probability $p_{\sfn(\sfm)}(t)$ in any time slot $t$, such that $p_{\sfn(\sfm)}P_{\sfn(\sfm)} \in [M/\tau, M]\}$, where $\tau \ge 1$ is a constant, where $(p_{\sfn(\sfm)}(t), P_{\sfn(\sfm)}(t))$ can depend on $d_0$, $h_{\sfn(\sfm)}(t)$, and history $e_{\sfn(\sfm), \sfm}(s)$,  $p_{\sfn(\sfm)}(s), p_{\sfn(\sfm)}(s), 1\le s< t$.

To obtain a lower bound on the expected delay, we set $\ell(d_0) = 1$ for the typical user $\sfm$, to maximize the signal power in terms of path-loss, since $\ell(.) \le 1$. This will allow us to remove the dependence of $(p_{\sfn(\sfm)}(t), P_{\sfn(\sfm)}(t))$ on $d_0$. 
Note that we are not putting any restriction on $d_0$, since that would impact the interference term in the SINR expression.

Let $\cU_z$ be the set of MUs connected to the BS $z \ne \sfn(\sfm)$. As before, we consider the practical setting where the MU density is much larger than the BS density, and $|\cU_z| \ge 1$ for all BSs $z$. \footnote{Otherwise, since MU and BS processes are independent, we get a thinned BS process, for which the derived results apply directly.} For BSs other than $\sfn(\sfm)$, we let all the MUs $u\in\cU_z$ connected to BS $z$ be at most a unit distance away from $z$, i.e., $d_{zu} \le1$, $\forall \ z,u\in\cU_z$, which implies $\ell(d_{zu})=1$. This minimizes the path-loss between $z$ and $u \in \cU_z$, and gives the largest gain $\ell(d_{zu})=1$ for each BS-MU pair ($z,u$) for $u\in\cU_z$. Moreover, we also let that each $u \in \cU_z$ receives no interference from any basestation $z' \ne z$. Let this be called {\bf restriction} $1$ that applies to all 
non-typical BSs $z\ne \sfn(\sfm)$ and its connected users in $\cU_z$. Thus, the SINR seen at any MU $u\in\cU_z, z \ne \sfn(\sfm)$ under restriction $1$ is 

\begin{equation}\label{eq:SINRz}
\SINR_{z, u}(t) = \frac{P_{z}(t) h_{zu}(t) \1_{z}(t)}{  \sfN}.
\end{equation}
As a function of time, let $\tilde {p}_{z}(t)$ and ${\tilde P}_{z}(t)$ (where $\tilde {p}_{z}(t){\tilde P}_{z}(t) \in [M/\tau, M]\}$ from strategy Definition \ref{defn:strategy}), be the power profile used by BS $z$ under restriction $1$ (SINR given by \eqref{eq:SINRz}) to get the same delay/capacity without restriction $1$ while using the optimal (unknown) power profile $p_{z}(t)$ and $P_{z}(t)$.
Clearly, since there is no path-loss and no interference seen at any $u\in \cU_z$ with restriction $1$ \eqref{eq:SINRz}, 
power profile $\tilde {p}_{z}(t)$ and ${\tilde P}_{z}(t)$ is stochastically dominated by $p_{z}(t)$ and $P_{z}(t)$. Equivalently, the interference seen at the typical MU $\sfm$ from BS $z \ne \sfn(\sfm)$ is stochastically dominated with restriction $1$ compared to without having restriction $1$.\footnote{Essentially, what we are doing is that let $S^\star = (p(t),P(t))$ be the (unknown) optimal strategy (power profile) that achieves minimum delay $\bbE\{D^\star\}$ for each BS $z$. Then, under restriction $1$, the power profile needed ${\tilde S}_z=(\tilde {p}_{z}(t), \tilde {P}_{z}(t))$ for all BSs $z\ne \sfn(\sfm)$ to achieve $\bbE\{D^\star\}$ at their respective MUs, is stochastically dominated by $S^\star$. Thus, with each BS using ${\tilde S}_z$, the expected delay at the typical MU $\sfm$ is at most $\bbE\{D^\star\}$.}

Recall that the SINR \eqref{eq:SINR} seen at the typical user $\sfm$ critically depends on the power profile ($p_{z}(t)$ and $P_{z}(t)$) of BS $z \in \Phi \backslash \{\sfn(\sfm)\}$. Thus, using the stochastically dominated power profile $\tilde {p}_{z}(t)$ and ${\tilde P}_{z}(t)$ with restriction $1$ for each BS $z \ne \sfn(\sfm)$ cannot increase the delay at $\sfm$.  Thus, we work under restriction $1$ to lower bound the expected delay seen at $\sfm$.

It is important to note that under restriction $1$, the power profile of BS $z$, i.e.,  $\tilde {\1}_{z}(t)$ and ${\tilde P}_{z}(t)$, only depends on $h_{zu}(t)$. Since $h_{zu}(t)$'s are independent for each BS-MU pair ($z,u$) $u\in \cU_z$ and across time $t$, restriction $1$ helps simplify the ensuing analysis significantly.

We also also neglect additive noise in the SINR expression \eqref{eq:SINR} for deriving the lower bound on the expected delay. 
 
Under this setup ($\ell(d_0) = 1$, no noise, and restriction $1$ for BSs $z \ne \sfn(\sfm)$), the effective SINR, $\tilde {\SINR}$ is similar to \eqref{eq:SINR} for the $\sfn(\sfm)-\sfm$ link and given by
\begin{equation}\label{eq:SINRrest}
\tilde {\SINR}_{\sfn(\sfm), \sfm}(t) = \frac{P_{\sfn(\sfm)}(t) h_{\sfn(\sfm)}(t) \1_{\sfn(\sfm)}(t)}{ \ga \sum_{z \in \Phi \backslash \{\sfn(\sfm)\}} \tilde {P}_z(t) \tilde {\1}_z(t)
h_z(t) \ell(z)}.
\end{equation}
Note that we have kept the path-loss $\ell(z)$ from BS $z$ to the typical MU $\sfm$ as it is with restriction $1$.

 Thus, we have the tail probability for the delay at the typical MU $\sfm$ as $\bbP(D > n | \Phi, \Phi_R)$
\begin{eqnarray}\nn
 & = & \bbP\left(\tilde {\SINR}_\sfm(1) < \beta, \dots, \tilde {\SINR}_\sfm(n) < \beta \right | \Phi, \Phi_R), \\\label{eq:lbint200}
 &=& \prod_{t=1}^n\bbP\left(\tilde {\SINR}_\sfm(t) < \beta \bigl | \Phi, \Phi_R\right),
\end{eqnarray}
where the second equality follows since given $\Phi, \Phi_R$, SINRs are independent; under restriction $1$, 
$\tilde {\1}_{z}(t)$ and ${\tilde P}_{z}(t)$ only depend on $h_{zu}(t)$ that are independent across time, and 
$\1_{\sfn(\sfm)}(t)$ and $P_{\sfn(\sfm)}(t)$ only depend on $h_{\sfn(\sfm)}(t)$\footnote{Under $\Phi, \Phi_R$ interference terms are independent across time and hence previous failures/successes have no effect on $\1_{\sfn(\sfm)}(t)$ and $P_{\sfn(\sfm)}(t)$.} that are independent across time, and $h_z(t)$ is also independent. 



In the Appendix \ref{app:ub}, we present the simpler proof for the case when the fading gain 
$h_{\sfn(\sfm)}(t)$ is not known at the typical BS, popularly called as the channel state information only at the receiver (CSIR). We proceed with the more challenging channel state information at both the transmitter and the receiver (CSIT) case as follows.

From \eqref{eq:SINRrest},
$\bbP\left(\tilde {\SINR}_\sfm(t) < \beta | \Phi, \Phi_R\right)$
\begin{equation} = \bbP\left(\frac{P_{\sfn(\sfm)}(t) h_{\sfn(\sfm)}(t) \1_{\sfn(\sfm)}(t)}{ \ga \sum_{z \in \Phi \backslash \{\sfn(\sfm)\}} \tilde {P}_z(t) \tilde{\1}_z(t)
h_z(t) \ell(z)} < \beta \bigl | \Phi, \Phi_R\right). \label{eq:lbint1}
\end{equation}
Note that with CSIT, $\1_{\sfn(\sfm)}(t)$ and $P_{\sfn(\sfm)}(t)$ can depend on $h_{\sfn(\sfm)}(t)$, and thus we use Proposition \ref{prop:waterfill}. \footnote{$\tilde{p}_{z}(t)$, $\tilde {P}_{z}(t)$ and $h_z(t)$'s are independent and are not available at the BS.}

Under restriction $1$, both $\tilde {P}_z(t)$ and $\tilde{\1}_z(t)$ are independent across time, and are unknown at $\sfm$. Moreover, their distributions are also unknown. Thus, 
to derive a lower bound, we let BS $\sfn(\sfm)$ choose the optimal transmission policy to minimize 
$\bbP\left(P_{\sfn(\sfm)}(t) h_{\sfn(\sfm)}(t) \1_{\sfn(\sfm)}(t) < \ga \beta \right)$ from Proposition \ref{prop:waterfill} that only depends on $h_{\sfn(\sfm)}(t)$ and $\gamma, \beta$.
Let $\tilde {I}(t) = \sum_{z \in \Phi \backslash \{\sfn(\sfm)\}} \tilde {P}_z(t) \tilde{\1}_z(t)
h_z(t) \ell(z)$.

From the structure of the optimal policy in Proposition \ref{prop:waterfill}, the outage probability $\bbP\left(\tilde {\SINR}_\sfm(t) < \beta | \Phi, \Phi_R\right)$
\begin{align}\nn
  =&  \bbP\left(h_{\sfn(\sfm)}(t)< \delta\right) 
\bbP\left(\frac{0}{\tilde {I}(t)} <  \ga \beta \bigl | \Phi, \Phi_R\right) \\\nn
  &+\bbP\left(h_{\sfn(\sfm)}(t)\ge \delta\right)\bbP\left(\frac{\ga \beta}{\tilde {I}(t)} < \ga \beta \bigl | \Phi, \Phi_R\right), \\\nn
= &\bbP\left(h_{\sfn(\sfm)}(t)< \delta\right) + \bbP\left(h_{\sfn(\sfm)}(t)\ge \delta\right) \bbP\left(\tilde {I}(t) >1 \bigl | \Phi, \Phi_R\right), \\ \label{eq:newwfmargpout} 
 = &(1-\exp(-\delta)) + \exp(-\delta) \bbP\left(\tilde {I}(t) >1 \bigl | \Phi, \Phi_R\right),
\end{align} 
where the last inequality follows since $h_{\sfn(\sfm)}(t) \sim EXP(1)$.
Now we derive bounds on $\bbP\left(\tilde {I}(t) >1 \bigl | \Phi, \Phi_R\right)$ to derive a lower bound on the expected delay.

Recall that $\bbP\left(\tilde {I}(t) >1 \bigl | \Phi, \Phi_R\right)$
\begin{align*}
 & = \bbP\left( \sum_{z \in \Phi \backslash \{\sfn(\sfm)\}} \tilde {P}_z(t) \tilde{\1}_z(t)
h_z(t) \ell(z) >1 \bigl | \Phi, \Phi_R\right),\\
& \ge 1- \prod_{z \in \Phi \backslash \{\sfn(\sfm)\}}  \bbP\left( \tilde {P}_z(t) \tilde{\1}_z(t)
h_z(t) \ell(z) < 1 \bigl | \Phi, \Phi_R\right), \\
& = 1- \prod_{z \in \Phi \backslash \{\sfn(\sfm)\}} \bbE\left\{\left( 1-\exp\left(\frac{-1}{\tilde {P}_z(t) \tilde{\1}_z(t) \ell(z)}\right)\right) \bigl | \Phi, \Phi_R\right\},
\end{align*} 
where the second equality follows since $h_z(t)$'s are independent for $z,t$, and $\tilde {P}_z(t), \tilde{\1}_z(t)$ are independent for $z$ under restriction $1$, and the last inequality follows by taking the expectation with respect to $h_z(t)\sim EXP(1)$. 

Taking the expectation with respect to $\tilde{\1}_z(t)$ (Bernoulli with $\bbE\{\tilde{\1}_z(t)\} = \tilde {p}_z(t)$), we get 
$\bbP\left(\tilde {I}(t) >1 \bigl | \Phi, \Phi_R\right)$
\begin{align}\nn
 & = 1- \prod_{z \in \Phi \backslash \{\sfn(\sfm)\}}\bbE\left\{ \left( 1-\tilde {p}_z(t) \exp\left(\frac{-1}{ \tilde {P}_z(t) \ell(z)}\right)\right)| \Phi, \Phi_R\right\}, \\ \nn
 & \ge 1- \prod_{z \in \Phi \backslash \{\sfn(\sfm)\}} \bbE\left\{\left( 1- \tilde {p}_z(t)\exp\left(\frac{-1}{\frac{M/\tau}{\tilde {p}_z(t)} \ell(z)}\right)\right)| \Phi, \Phi_R\right\}, \\ \nn
& = 1- \prod_{z \in \Phi \backslash \{\sfn(\sfm)\}} \bbE\left\{\left( 1- \tilde {p}_z(t)\exp\left(\frac{-\tilde {p}_z(t)}{(M/\tau) \ell(z)}\right)\right)| \Phi, \Phi_R\right\}, 
 \\  \label{eq:finalI0}
 & \ge 1- \prod_{z \in \Phi \backslash \{\sfn(\sfm)\}}\left( 1-  \bbE\left\{\tilde {p}_z(t) | \Phi, \Phi_R\right\}\exp\left(\frac{-1}{(M/\tau) \ell(z)}\right)\right), 
\end{align} 
where the second inequality follows since  $\tilde{p}_z(t)\tilde{P}_z(t) \ge M/\tau$ for each BS $z$, and the final inequality follows since $\tilde{p}_z(t)\le 1$, and the fact that $\tilde{p}_z(t)$ are independent for different BSs $z$ under restriction $1$.

\begin{remark} Recall that the BS density is much smaller than MU density, hence $|\cU_z| >> 1$, i.e., each BS $z$ is transmitting to multiple MUs in its Voronoi cell. 
Thus, for ${\tilde p}_z(t)$ (that is independent across $z$ and $t$) that only depends on fading gains from $z$ to $\cU_z$, we let $\bbE\{{\tilde p}_z(t) | \Phi, \Phi_R\} \ge \eta >0$ (where $\eta$ is a constant) where the expectation is with respect to the fading gains. Essentially, we have that the probability of transmission of any BS is bounded away from zero for any BS deployment (realization of $\Phi$), which is reasonable to assume in practice. 
\end{remark} 
Thus, it follows that 
$\bbP\left(\tilde {I}(t) >1 \bigl | \Phi, \Phi_R\right)$
\begin{align}\label{eq:finalI1}
  & = 1- \prod_{z \in \Phi \backslash \{\sfn(\sfm)\}} \left( 1-\eta \exp\left(\frac{-1}{(M/\tau) \ell(z)}\right)\right).
\end{align} 

Substituting \eqref{eq:finalI1} into \eqref{eq:newwfmargpout}, we get
$\bbP\left(\tilde {\SINR}_\sfm(t) < \beta | \Phi, \Phi_R\right)$
\begin{align*} 
&\ge (1-\exp(-\delta)) \\
 &+ \exp(-\delta)\left( 1- \prod_{z \in \Phi \backslash \{\sfn(\sfm)\}} \left( 1- \eta \exp\left(\frac{-1}{(M/\tau)\ell(z)}\right)\right)\right), \\
 &= \left( 1- \exp(-\delta)\prod_{z \in \Phi \backslash \{\sfn(\sfm)\}} \left( 1- \eta \exp\left(\frac{-1}{(M/\tau)\ell(z)}\right)\right)\right),
\end{align*} which 
from \eqref{eq:lbint200}, gives $\bbP(D > n | \Phi, \Phi_R)$
\begin{align}\label{eq:eq3y}
 &\ge  \prod_{t=1}^n  \left(1- \exp(-\delta)\prod_{z \in \Phi \backslash \{\sfn(\sfm)\}} \left( 1-\eta\exp\left(\frac{-1}{(M/\tau)\ell(z)}\right)\right)\right).
\end{align}

Let $g= \prod_{z \in \Phi \backslash \{\sfn(\sfm)\}} \left( 1-\eta\exp\left(\frac{-1}{(M/\tau)\ell(z)}\right)\right)$. Then $\bbE\{D | \Phi, \Phi_R\}$
\begin{eqnarray*} 
&= & \sum_{n=0}^\infty \bbP(D > n | \Phi, \Phi_R), \\
&\ge & \sum_{n=0}^\infty (1-\exp(-\delta)g)^n,\ \text{from}\ \eqref{eq:eq3y}\\
&=& \frac{1}{1-(1-\exp(-\delta)g)}, \\
&=& \frac{1}{\exp(-\delta) g}.
\end{eqnarray*}
Thus, using the definition of $g$,
$$\bbE\{D | \Phi, \Phi_R\}  \ge \exp(\delta) \prod_{z \in \Phi \backslash \{\sfn(\sfm)\}} \frac{1}{\left( 1-\eta\exp\left( \frac{-1}{(M/\tau)\ell(z)}\right)\right)}.$$

To find $\bbE\{D\}$ from $\bbE\{D | \Phi, \Phi_R\}$, we first use the probability generating functional (Proposition \ref{prop:PGF}) for the PPP $\Phi \backslash \{\sfn(\sfm)\}$ to get $\bbE\{D | d_0\}$
\begin{small}
\begin{eqnarray*} 
&\ge & \exp(\delta)\\
&&\exp\left( \lambda  \int_{\bbR^2 \backslash \bB(0, d_0)}\left( \frac{1}{\left( 1-\eta\exp\left( \frac{-1}{(M/\tau)\ell(z)}\right)\right)}-1\right) dz\right) ,\\
&=&  \exp(\delta) \exp\left( 2 \pi \lambda  \int_{\bbR, z>d_0} \frac{\eta\exp\left(\frac{-1}{(M/\tau)\ell(z)}\right)}
{\left( 1-\eta\exp\left(\frac{-1}{(M/\tau)\ell(z)}\right)\right)} z \ dz\right).
\end{eqnarray*}
\end{small}

Taking the expectation with respect to the nearest BS distance $d_0$, 
whose PDF $f_{d_0}(x)$ is given by Proposition \ref{prop:nnpdf}, we get $\bbE\{D\}$
\begin{eqnarray*} 
&\ge & \exp(\delta)\\
&&\int_x \exp\left(2 \pi \lambda \int_{\bbR, z>x} \frac{\eta\exp\left(\frac{-1}{(M/\tau)\ell(z)}\right)}
{\left( 1-\eta\exp\left(\frac{-1}{(M/\tau)\ell(z)}\right)\right)} z \ dz   \right)\\
&& f_{d_0}(x) dx
\end{eqnarray*}

Splitting the integral into two parts for $x \le 1$ and $x > 1$, and keeping only the part with $x\le 1$, we get $\bbE\{D\}$
\begin{eqnarray*} 
&\ge & \exp(\delta)\\
&&  \int_{x\le 1} \exp\left(2 \pi \lambda \int_{\bbR, z>x} \frac{\eta\exp\left(\frac{-1}{(M/\tau)\ell(z)}\right)}
{\left( 1-\eta\exp\left(\frac{-1}{(M/\tau)\ell(z)}\right)\right)} z \ dz   \right)\\
&& f_{d_0}(x) dx.\end{eqnarray*}
For $z < 1$, $\ell(z) = 1$, thus, $\bbE\{D\}$
\begin{align} \nn 
&\ge  \int_{x\le 1}\exp(\delta)\exp\left(2 \pi \lambda c_2  \int_{x}^1 z \ dz   \right)f_{d_0}(x) dx, \\ \label{eq:lbint2}
&=  \int_{x\le 1}\exp(\delta)\exp\left(2 \pi \lambda c_2  \left(\frac{1-x^2}{2}\right)  \right)f_{d_0}(x) dx,
\end{align}
where $c_2 = \frac{\eta\exp\left(\frac{-1}{(M/\tau)} \right)}
{\left( 1-\eta\exp\left(\frac{-1}{(M/\tau)} \right)\right)}$. 

Evaluating the expectation with respect to the nearest BS distance $d_0$, whose PDF is given by Proposition \ref{prop:nnpdf}, we get $\bbE\{D\} $
\begin{eqnarray}\nn 
&\ge & \exp(\delta)\exp\left( \pi \lambda c_2\right) \int_{x\le 1} \exp \left(-\pi \lambda c_2 x^2\right)  f_{d_0}(x) dx, \\\nn
&=&\exp(\delta)\exp\left( \pi \lambda c_2\right)\\ \nn
&& \int_{x\le 1} \exp \left(-\pi \lambda c_2 x^2\right)  2 \pi \lambda x \exp(-\lambda \pi x^2) dx, \\\nn
&=&  \exp(\delta)\exp\left( \pi \lambda c_2\right) \\\label{eq:delaylbfinal}
&&\frac{1}{1+c_2}\left(1-\exp\left(-\pi \lambda (c_2+1)\right)\right).
\end{eqnarray}
Thus, \eqref{eq:ubound1} together with \eqref{eq:delaylbfinal} give the required lower bound on the expected delay stated in Theorem \ref{thm:lb}.

\begin{proposition}\label{prop:PGF} Let $\cG$ be the family of all non-negative, bounded measurable functions $g: \bbR^d \rightarrow \bbR$ on $\bbR^d$ whose support $\{x \in \bbR^d: g(x) > 0\}$ is bounded. Let $\cF$ be the family of all functions $f = 1-g,$  for $g\in\cG, \ 0\le g \le 1$. Then the \index{probability generating functional} probability generating functional for a point process $\Phi = \{x_n\}$ is defined as 
$$PGF(f) = \bbE\left\{\prod_{x_n \in \Phi} f(x_n)\right\}.$$
 For a homogenous PPP $\Phi$ with density $\lambda$, the probability generating functional is given by 
$$ PGF(f) = \exp^{-\int (1-f(x)) \lambda dx}.$$
\end{proposition}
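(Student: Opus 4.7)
The plan is to exploit the bounded-support assumption on $g=1-f$ to reduce the (in principle infinite) product in the definition of $PGF(f)$ to an almost-surely finite one, and then use the elementary Poisson conditioning property of a PPP. Let $B \subset \bbR^d$ be a bounded Borel set containing the support of $g$. Since $f \equiv 1$ off $B$, every factor with $x_n \notin B$ is trivial, and
\begin{equation*}
\prod_{x_n \in \Phi} f(x_n) \; = \; \prod_{x_n \in \Phi \cap B} f(x_n).
\end{equation*}
Because $\Phi$ is homogeneous PPP of intensity $\lambda$, the count $N := |\Phi \cap B|$ is $\mathrm{Poisson}(\lambda |B|)$, hence a.s.\ finite, so the product is well-defined and lies in $[0,1]$.

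Next I would condition on $N$ and invoke the classical representation: conditional on $\{N=n\}$ the $n$ points of $\Phi \cap B$ are i.i.d.\ uniform on $B$. Thus
\begin{equation*}
\bbE\!\left[\prod_{x_n \in \Phi \cap B} f(x_n) \,\Bigl|\, N=n\right] \; = \; \left(\frac{1}{|B|}\int_B f(x)\, dx\right)^{\!n},
\end{equation*}
since the expectation of a product of $n$ i.i.d.\ copies of $f(U)$, with $U$ uniform on $B$, factorizes into the $n$-th power of $\bbE\{f(U)\}$. All terms are non-negative and bounded by $1$, so Tonelli justifies exchanging expectation and sum in the next step.

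Averaging over the Poisson law of $N$ then collapses the series to an exponential:
\begin{equation*}
PGF(f) \; = \; \sum_{n=0}^{\infty} e^{-\lambda |B|} \frac{(\lambda |B|)^n}{n!} \left(\frac{1}{|B|}\int_B f(x)\, dx\right)^{\!n} \; = \; \exp\!\left(-\lambda |B| + \lambda \int_B f(x)\, dx\right).
\end{equation*}
Rewriting the exponent as $-\lambda \int_B (1-f(x))\, dx$ and using that $1-f$ vanishes off $B$ to extend the integral to all of $\bbR^d$ yields $PGF(f) = \exp\!\left(-\lambda \int_{\bbR^d}(1-f(x))\, dx\right)$, as claimed.

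No deep obstacle arises; the work is really bookkeeping. The hypotheses that $g$ is bounded, non-negative, and compactly supported, together with $0 \le f \le 1$, are precisely what guarantee (i) $N < \infty$ a.s., (ii) absolute convergence of the Poisson average, and (iii) that the integral on $\bbR^d$ reduces to the integral on $B$. The only subtlety to flag is the conditional-uniformity property of the PPP used in the second step; this is a defining feature of the PPP rather than an additional assumption, so the proof is self-contained up to that standard fact.
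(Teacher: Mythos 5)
Your proof is correct. Note that the paper itself offers no proof of this proposition: it is stated as a standard fact from stochastic geometry (the probability generating functional, or Laplace-type functional, of a homogeneous PPP), so there is no "paper's approach" to compare against. Your derivation is the classical one: reduce the product to the a.s.\ finite set $\Phi \cap B$ using the bounded support of $g = 1-f$, condition on $N = |\Phi \cap B| \sim \mathrm{Poisson}(\lambda|B|)$, use the conditional i.i.d.-uniform representation of the points to factor the conditional expectation into $\bigl(\frac{1}{|B|}\int_B f\bigr)^n$, and sum the Poisson series into the exponential $\exp\bigl(-\lambda\int_B(1-f)\bigr)$, which extends to $\bbR^d$ since $1-f$ vanishes off $B$. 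Each step is justified by exactly the hypotheses you cite, and the conditional-uniformity property you flag is indeed the standard characterization of the homogeneous PPP rather than an extra assumption. The only cosmetic quibble is the clash between the dummy index $n$ in $x_n$ and the value of the count $N=n$; otherwise the argument is complete and self-contained.
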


\section{Proof of Corollary \ref{cor:bscord}}\label{app:cor:bscord}

The proof corresponding to the low-BS density regime \eqref{eq:newlb1c} that is obtained by ignoring the interference remains the same as the bound \eqref{eq:newlb1}, since at best if $k$ BSs are coordinating, the interference from the $k-1$ BSs can be removed at the $\sfm$.

Even when the interference is considered to get \eqref{eq:newlb2c}, most of the proof is identical to Theorem \ref{thm:lb}, where once again we make path-loss function $\ell(d_0)=1$ for the typical BS-MU link, and enforce restriction $1$. Supposing $k$ BSs are scheduling their transmissions together, then at best, the interference from the nearest $k-1$ BSs (other than $\sfn(\sfm)$ can be eliminated from $\SINR_{\sfn(\sfm), \sfm}(t)$ \eqref{eq:SINR}. Hence, following identical analysis as in Theorem \ref{thm:lb}, starting from \eqref{eq:lbint1}, where replacing the sum in the interference term from $z \in \Phi \backslash \{\sfn(\sfm)\}$ by $z \in \Phi \backslash \{k \ \text{nearest BSs from the origin}\}$,
till \eqref{eq:lbint2}, where replacing the 
nearest BS distance $d_0$ with the $k^{th}$ nearest BS distance $d_k$, we get 
$$
\bbE\{D\} \ge  \int_{x\le 1}\exp(\delta)\exp\left(2 \pi \lambda c_2  \left(\frac{1-x^2}{2}\right)  \right)f_{d_k}(x) dx.
$$
Taking the expectation with respect to the $k^{th}$ nearest BS distance $d_k$, whose PDF is given by Proposition \ref{prop:nnkpdf}, we get the same expected delay (upto constants) lower bound as in \eqref{eq:delaylbfinal}. Essentially, for deriving \eqref{eq:delaylbfinal}, we have shown that enough interferers lie in a unit disc around $\sfm$. Similar conclusion can be made even when considering interfering BSs other than the $k-1$ nearest ones as $\lambda$ increases. 

\begin{proposition}\cite{haenggi2005distances} \label{prop:nnkpdf} The cumulative distribution function and probability distribution function of the $k^{th}$ nearest BS distance $d_k$ is 
$$\bbP(d_k > y) =\sum_{i=1}^{k-1}\frac{(\lambda \pi y)^i}{i!}\exp(-\lambda \pi y^2),$$ and $$f_{d_k}(y) = 
\frac{2(\lambda \pi)^k y^{2k-1}}{(k-1)!} \exp(-\lambda \pi y^2).$$
\end{proposition}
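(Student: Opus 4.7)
The plan is to invoke two standard properties of a homogeneous PPP on $\bbR^2$ and a direct event-equivalence argument. First, for $\Phi$ a PPP of intensity $\lambda$, the count $N(B(0,y)) = |\Phi \cap B(0,y)|$ is Poisson distributed with mean $\lambda \cdot \pi y^2$, since the disk $B(0,y)$ has area $\pi y^2$. Second, for any realization of $\Phi$, the $k$-th nearest point from the origin lies at distance strictly greater than $y$ if and only if fewer than $k$ points of $\Phi$ fall inside $B(0,y)$.

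Using this equivalence,
\begin{equation*}
\bbP(d_k > y) \;=\; \bbP\bigl(N(B(0,y)) \le k-1\bigr) \;=\; \sum_{i=0}^{k-1} \frac{(\lambda \pi y^2)^i}{i!}\exp(-\lambda \pi y^2),
\end{equation*}
which is the stated CDF (up to the cosmetic range of the summation index).

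For the density, I would differentiate: $f_{d_k}(y) = -\frac{d}{dy}\bbP(d_k > y)$. Writing $m(y) = \lambda \pi y^2$ so that $m'(y) = 2\pi\lambda y$, the termwise derivative of $\frac{m(y)^i}{i!}e^{-m(y)}$ produces two contributions, one from differentiating $m^i/i!$ and one from differentiating $e^{-m}$. Shifting the index $j = i-1$ in the first sum shows that consecutive terms cancel in telescoping fashion; the only survivor is the boundary term at $i = k-1$, giving
\begin{equation*}
f_{d_k}(y) \;=\; \frac{m(y)^{k-1}}{(k-1)!}\,e^{-m(y)}\,m'(y) \;=\; \frac{2(\lambda \pi)^k\,y^{2k-1}}{(k-1)!}\exp(-\lambda \pi y^2),
\end{equation*}
matching the claim.

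There is no real obstacle in this argument: the only step requiring any care is verifying the telescoping cancellation in the derivative of the finite Poisson CDF sum, which is a routine index shift. As a sanity check, the same formula can be obtained from first principles as the product of (a) the probability that exactly $k-1$ points lie in the open disk $B(0,y)$, (b) the intensity $\lambda$, and (c) the circumference $2\pi y$ of the boundary on which the $k$-th point must land, consistent with the Slivnyak--Mecke / Campbell-style reasoning for order statistics of PPP distances.
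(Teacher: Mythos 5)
Your proof is correct and follows essentially the same route as the paper's: identify $\{d_k>y\}$ with the event that at most $k-1$ points of the PPP fall in $B(0,y)$, apply the Poisson count with mean $\lambda\pi y^2$, and differentiate with the telescoping cancellation to obtain the density. You also correctly repair two typos in the stated formula (the sum should start at $i=0$ and the terms should read $(\lambda\pi y^2)^i$ rather than $(\lambda\pi y)^i$); the paper's one-line proof glosses over both.
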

\begin{proof} For $d_k > y$, there can be at most $k-1$ BSs in the disc of radius $y$ with node $\sfm$ as center located at the origin. Hence $$\bbP(d_k > y) = \sum_{i=1}^{k-1}\frac{(\lambda \pi y)^i}{i!}\exp(-\lambda \pi y^2),$$ and differentiating $\bbP(d_k > y)$, we 
get the mentioned PDF. 
\end{proof}

\section{Upper Bound on Capacity under CSIR}\label{app:ub}
As before, consider the restriction $1$ for all the non-typical BSs and the MUs served by them. Moreover, $\ell(d_0)=1$ for the typical MU.
Let $\Phi, \Phi_R$ be the sigma field generated by the BS point process $\Phi$, and the MU point process $\Phi_R$. Thus, we have the tail probability $\bbP(D > n | \Phi, \Phi_R) $
\begin{eqnarray}\nn
& = & \bbP\left(\tilde {\SINR}_\sfm(1) < \beta, \dots, \tilde {\SINR}_\sfm(n) < \beta \right | \Phi, \Phi_R), \\\label{app:eq:lbint200}
 &=& \prod_{t=1}^n\bbP\left(\tilde {\SINR}_\sfm(t) < \beta \bigl | \Phi, \Phi_R\right),
\end{eqnarray}
where the second equality follows since given $\Phi, \Phi_R$, the only randomness left in the SINRs ($\tilde {\SINR}$) is because of the fading gains 
$h$'s that are independent across time slots and all BS-MU pairs. 
Thus,
$\bbP\left(\tilde {\SINR}_\sfm(t) < \beta | \Phi, \Phi_R\right)$
\begin{equation} = \bbP\left(\frac{P_{\sfn(\sfm)}(t) h_{\sfn(\sfm)}(t) \1_{\sfn(\sfm)}(t)}{ \ga \sum_{z \in \Phi \backslash \{\sfn(\sfm)\}} \tilde {P}_z(t) \tilde{\1}_z(t)
h_z(t) \ell(z)} < \beta \bigl | \Phi, \Phi_R\right). \label{app:eq:lbint1}
\end{equation}

Taking the expectation with respect to $h_{\sfn(\sfm)}(t) \sim EXP(1)$ (note that $\1_{\sfn(\sfm)}(t)$ and $P_{\sfn(\sfm)}(t)$ cannot depend on $h_{\sfn(\sfm)}(t)$ under CSIR, which makes the analysis easy as follows), we get 
$\bbP\left(\tilde{\SINR}(t) < \beta | \Phi, \Phi_R\right) = 1-$
\begin{equation}\label{app:eq:lbint300}  \bbE\left\{\exp\left(\frac{-\beta \ga \sum_{z \in \Phi \backslash \{\sfn(\sfm)\}} \tilde{P}_z(t) \tilde{\1}_z(t)
h_z(t) \ell(z)  }{ P_{\sfn(\sfm)}(t) \1_{\sfn(\sfm)}(t)} \right) | \Phi, \Phi_R\right\}.
\end{equation}
Taking the expectation with respect to $\1_{\sfn(\sfm)}(t)$, 
$\bbP\left(\tilde{\SINR}(t) < \beta | \Phi, \Phi_R\right)$
\begin{eqnarray*} &=& 1- \bbE\left\{p_{\sfn(\sfm)}(t)\right.\\
&&\left. \exp\left(\frac{-\beta \ga \sum_{z \in \Phi \backslash \{\sfn(\sfm)\}} \tilde{P}_z(t) \tilde{\1}_z(t)
h_z(t) \ell(z)  }{ P_{\sfn(\sfm)}(t) }\right)| \Phi, \Phi_R\right\}.
\end{eqnarray*}

It is easy to check that for any $r \ge 1$, $r\exp(-c) \ge \exp(-c/r)$. 
Applying this identity to power $P_{\sfn(\sfm)}$ that is assumed to be $\ge 1$, we get 
$\bbP\left(\tilde{\SINR}(t) < \beta | \Phi, \Phi_R\right)$
\begin{eqnarray*} &\ge & 1- \bbE\left\{p_{\sfn(\sfm)}(t) P_{\sfn(\sfm)}(t)\right. \\
&&\left.\exp\left(-\beta \ga \sum_{z \in \Phi \backslash \{\sfn(\sfm)\}} \tilde{P}_z(t) \tilde{\1}_z(t)
h_z(t) \ell(z) \right) | \Phi, \Phi_R\right\}, \\
&=& 1- M \\ 
&&\bbE\left\{\exp\left(-\beta \ga \sum_{z \in \Phi \backslash \{\sfn(\sfm)\}} \tilde{P}_z(t) \tilde{\1}_z(t)
h_z(t) \ell(z)  \right)| \Phi, \Phi_R\right\},
\end{eqnarray*}
where the second equality follows since $p_{\sfn(\sfm)}(t) P_{\sfn(\sfm)}(t) \le M$.

Taking the expectation with respect to $h_z(t)$ and $\tilde{\1}_z(t)$, 
$\bbP\left(\tilde{\SINR}(t) < \beta | \Phi, \Phi_R\right) \ge 1- $
\begin{equation*} 
 M \bbE\left\{\prod_{z \in \Phi \backslash \{\sfn(\sfm)\}} \left(\frac{\tilde{p}_z(t)}{1+\tilde{P}_z(t) \ga \beta \ell(z)}   + (1-\tilde{p}_z(t))
   \right)| \Phi, \Phi_R\right\}.
\end{equation*}

Rearranging terms, $\bbP\left(\tilde{\SINR}(t) < \beta | \Phi, \Phi_R\right)$
\begin{align} \nn
\ge & 1- M \bbE\left\{\prod_{z \in \Phi \backslash \{\sfn(\sfm)\}} \left(1- \frac{\tilde{p}_z(t)\tilde{P}_z(t)\ga \beta\ell(z)}{1+\tilde{P}_z(t) \ga \beta \ell(z)}  
   \right)| \Phi, \Phi_R\right\}, \\ \nn
\ge & 1- M \bbE\left\{\prod_{z \in \Phi \backslash \{\sfn(\sfm)\}} \left(1- \frac{(M/\tau)\ga \beta\ell(z)}{1+ \frac{M}{\tilde{p}_z(t)} \ga \beta \ell(z)}  
  \right)| \Phi, \Phi_R\right\}, \\ \nn
\ge & 1- M \bbE\left\{\prod_{z \in \Phi \backslash \{\sfn(\sfm)\}} \left(1- \frac{\tilde{p}_z(t)(M/\tau)\ga \beta\ell(z)}{\tilde{p}_z(t)+ M \ga \beta \ell(z)}  
   \right)| \Phi, \Phi_R\right\}, \\\label{app:eq:dummyx1}
   \ge & 1- M \prod_{z \in \Phi \backslash \{\sfn(\sfm)\}} \left(1- \frac{\bbE\left\{\tilde{p}_z(t)| \Phi, \Phi_R\right\}(M/\tau)\ga \beta\ell(z)}{1+ M \ga \beta \ell(z)}  
   \right), 
\end{align}
where the second and the third inequalities follow since $\tilde{p}_z(t)\tilde{P}_z(t) \ge M/\tau$, and  $\tilde{p}_z(t)\tilde{P}_z(t) \le M$, respectively. For the final inequality, we bound the $\tilde{p}_z(t)\le 1$ in the denominator by $\tilde{p}_z(t)= 1$, note that $\tilde{p}_z(t)$ is independent for different BSs $z$. 
Denoting $\eta = \bbE\{\tilde{p}_z(t)\}$, and substituting \eqref{app:eq:dummyx1} into \eqref{app:eq:lbint200}, $\bbP(D > n | \Phi, \Phi_R)$
\begin{equation}\label{app:eq:eq3y}
 \ge  \prod_{t=1}^n\left(1- M \prod_{z \in \Phi \backslash \{\sfn(\sfm)\}} \left(1- \frac{\eta (M/\tau)\ga \beta\ell(z)}{1+ M \ga \beta \ell(z)}  
   \right)\right).
\end{equation}

Let $g= M\prod_{z \in \Phi \backslash \{\sfn(\sfm)\}}\left(1- \frac{\eta (M/\tau)\ga \beta\ell(z)}{1+ M \ga \beta \ell(z)}\right)$, where $g \le 1$. Then $\bbE\{D | \Phi\}$
\begin{eqnarray*} 
&= & \sum_{n=0}^\infty \bbP(D > n | \Phi, \Phi_R), \\
&\ge & \sum_{n=0}^\infty (1-g)^n,\ \text{from}\ \eqref{app:eq:eq3y}\\
&=& \frac{1}{g}, \\
&\ge& \frac{1}{M}\prod_{z \in \Phi \backslash \{\sfn(\sfm)\}} \left(\frac{1+ M \ga \beta \ell(z)}{1+M \ga \beta \ell(z)-\eta (M/\tau)\ga \beta \ell(z)}\right).
\end{eqnarray*}

To find $\bbE\{D\}$ from $\bbE\{D | \Phi\}$, we first use the probability generating functional (Proposition \ref{prop:PGF}) for the PPP $\Phi \backslash \{\sfn(\sfm)\}$ to get $\bbE\{D | d_0\}$
\begin{eqnarray*} 
&\ge & \frac{1}{M}\exp\left( \lambda  \int_{\bbR^2 \backslash \bB(0, d_0)} \left(\frac{1+ M \ga \beta \ell(z)}{1+M \ga \beta \ell(z)-\eta M\ga \beta\ell(z)} \right. \right.\\ 
&& \left. \left.
-1 \right) dz\right),\\
&=& \frac{1}{M} \exp\left( 2 \pi \lambda  \int_{\bbR, z>d_0} \left(\frac{1+ M \ga \beta \ell(z)}{1+M \ga \beta \ell(z)-\eta M\ga \beta\ell(z)} \right. \right.\\
&& \left. \left.
 -1 \right) z \ dz\right).
\end{eqnarray*}
Taking the expectation with respect to the nearest BS distance $d_0$, whose PDF is given by Proposition \ref{prop:nnpdf}, we get $\bbE\{D\} $
\begin{eqnarray*} 
&=& \int _x \frac{1}{M} \exp\left( 2 \pi \lambda  \int_{\bbR, z>d_0} \left(\frac{1+ M \ga \beta \ell(z)}{1+M \ga \beta \ell(z)-\eta M\ga \beta\ell(z)} \right. \right.\\
&& \left. \left.
 -1 \right) z \ dz\right)
f_{d_0}(x) dx
\end{eqnarray*}

Splitting the integral into two parts for $x \le 1$ and $x > 1$, and keeping only the part with $x\le 1$, where for $z < 1$, $\ell(z) = 1$, we get $\bbE\{D\}$
\begin{eqnarray*} 
&\ge & \int _{x\le 1} \frac{1}{M} \\
&&\exp\left(2 \pi \lambda \int_{x}^1 \left(\frac{1+ M \ga \beta }{1+M \ga \beta -\eta (M/\tau)\ga \beta} -1 \right) z \ dz   \right) \\ 
&&f_{d_0}(x) dx, \\
&= & \int _{x\le 1}\frac{1}{M} \\
&&\exp\left(2 \pi \lambda \left(\frac{1+ M \ga \beta }{1+M \ga \beta -\eta (M/\tau)\ga \beta} -1 \right)  \int_{x}^1 z \ dz   \right)\\
&& f_{d_0}(x) dx, \\
&= & \int _{x\le1}\frac{1}{M} \\ 
&&\exp\left(2 \pi \lambda \left(\frac{1+ M \ga \beta }{1+M \ga \beta -\eta (M/\tau)\ga \beta} -1 \right)  \left(\frac{1-x^2}{2}\right)  \right)\\
&&
f_{d_0}(x) dx. \\
\end{eqnarray*}
Let $c_8 = \left(\frac{1+ M \ga \beta}{1+M \ga \beta -\eta (M/\tau)\ga \beta} -1 \right)$. Then evaluating the expectation with respect to the nearest BS distance $d_0$, whose PDF is given by Proposition \ref{prop:nnpdf}, we get $\bbE\{D\} $
\begin{eqnarray}\nn 
&\ge& \frac{\exp\left( \pi \lambda c_8\right)}{M} \int_{x\le 1} \exp \left(-\pi \lambda c_8 x^2\right)  2 \pi \lambda x \exp(-\lambda \pi x^2) dx, \\\label{app:eq:delaylbfinal}
&=&  \frac{\exp\left( \pi \lambda c_8\right)}{M} \frac{1}{1+c_8}\left(1-\exp\left(-\pi \lambda (c_8+1)\right)\right).
\end{eqnarray}

\section{Proof of Theorem \ref{thm:ub}}\label{app:ach}
Without loss of generality, we will derive the upper bound on the expected delay for the typical user $\sfm$ that is served by its nearest BS $\sfn(\sfm)$ at a distance of $d_0$ from it.  We consider $k$ successive slots (not necessarily consecutive) that are dedicated for transmission to $\sfm$ by BS $\sfn(\sfm)$, and are interested in probability $\bbP(D > k)$ to upper bound the expected delay, where delay $D$ is as defined in Definition \ref{defn:delay}.

Typically, the number of MUs connected to different BSs are different. Thus, during the $k$ considered slots for $\sfm$, any BS other than $\sfn(\sfm)$ transmits potentially to different MUs. 
Let $\cG_k$ be the sigma field generated by the BS point process $\Phi$ and MU point process $\Phi_R$
and the choice (index) of MUs being served by BSs of $\Phi$ at the above described $k$ slots 
$t=1,2, \ldots ,k$. 

With the above mentioned local power control strategy \eqref{eq:powercontrol}, the SINR seen at $\sfm$ in time slot $t$ is 
\begin{equation}\label{eq:newSINR}
\SINR_{n(\sfm), \sfm}(t) = \frac{ c h_{\sfn(\sfm)}(t)  \1_{\sfn(\sfm)}(t)}{\ga I(t)  + \sfN},
\end{equation}
where
\begin{equation}\label{eq:redintf}
I(t) = \sum_{z \in \Phi \backslash \{ \sfn(\sfm) \}} \1_z(t)
P_z(t) h_z(t) \ell(z).
\end{equation}
With 
$e_{\sfn(\sfm), \sfm}(t) =1$ if $\SINR_{(\sfm,\sfn(\sfm))}(t) > \beta$, and $0$
otherwise, we have 

\begin{equation}\label{eq:dummy1}
 \bbP\left[D > k \big| \cG_k \right] =  \bbE \left\{\bbP\left[e_{\sfn(\sfm), \sfm}(t) =0, \; \forall \;
t=1,\dots, k \big| \cG_k \right]  \right\}.
\end{equation}

Given $\cG_k$, with the described strategy \eqref{eq:powercontrol}, the
transmission events $\1_{z}(t)$, and the transmit powers $P_z(t)$ are independent across time slots $t$ for all 
BSs $z$. Moreover, the fading 
gains $h_{(.)}(t)$ are all independent. Hence, we get 
\begin{equation}
\label{eq:indep} \bbP\left[D > k \big| \cG_k \right]  =  \bbE \left\{
\prod_{t=1}^k \bbP \left[ e_{\sfn(\sfm), \sfm}(t) =0 \big| \cG_k \right] \right\}.
\end{equation}
Let $A(t)$ be the event that $ \1_{\sfn(\sfm)}(t) =0$, i.e., the BS $\sfn(\sfm)$ does not transmit in the slot designated for user $\sfm$, while $B(t)$ be the event that 
$\{ \1_{\sfn(\sfm)}(t)=1\} \cap e_{\sfn(\sfm), \sfm}(t) =0$, i.e., BS $\sfn(\sfm)$ transmits in slot $t$ but the transmission fails. Then 
\begin{equation}
\label{eq:indep2} \bbP\left[D > k \big| \cG_k \right]  =  \bbE \left\{
\prod_{t=1}^k \bbP \left[ A(t) \cup B(t) \big| \cG_k \right] \right\}.
\end{equation}
Because of our power transmission strategy, given $\cG_k$, transmission event $\1_{\sfn(\sfm)}(t)$ and the success event $e_{\sfn(\sfm), \sfm}(t)$ are independent, hence
$\bbP(A(t)|\cG_k) = 1 - p_{\sfn(\sfm)}(t)$, while $\bbP(B(t)|\cG_k)$
\begin{equation} \label{eq:E2}
 = p_{\sfn(\sfm)}(t)
\left(1-\bbE\left\{\exp\left(- \f{\beta}{c} (\sfN+ \ga
I(t))\right)\bigg{|}\cG_k\right\}\right),
\end{equation}
that follows by taking expectation with respect to $h_{\sfn(\sfm)}(t) \sim
EXP(1)$. 
Using the union bound, from \eqref{eq:indep2}, we get $\bbP \left[ A(t) \cup B(t) \big| \cG_k \right]$
\begin{eqnarray} \nn
 & \leq & 1- p_{\sfn(\sfm)}(t) + p_{\sfn(\sfm)}(t)\\\nn
 &&
 \left(1-\bbE\left\{\exp\left(- \f{ \beta}{c} (\sfN+ \ga I(t))\right)\Big{|}\cG_k \right\}\right),\\\nn
& \leq & 1 - p_{\sfn(\sfm)}(t) \exp\left(- \f{ \beta \sfN}{c}
\right)\, \\ \label{eq:E3} 
&&\bbE\left\{ \exp\left(- \f{\beta \ga}{c}I(t)\right) |\cG_k \right\}.
\end{eqnarray}

Let $a =
\f{ \beta \ga}{c}$, and we focus on finding a lower bound on $\bbE\left\{ \exp\left(- 
a I(t)\right) |\cG_k \right\}$, that is independent of
the choice of the MU being served by BS $z$. To this end,  
we first expand $\bbE\left\{ \exp\left(- aI(t)\right) \Big|\cG_k \right\}$
\begin{equation} \label{eq:mgfint}
 = \prod_{z \in \Phi \backslash \{ \sfn(\sfm) \}} \bbE\left\{ \exp\left(- a \1_z(t)
p_z^{(u(z))}(t) h_z(t) \ell(z)\right) \Big|\cG_k \right\}
\end{equation}
where BS $z \in \Phi \backslash \{\sfn(\sfm) \}$ transmits to the MU $u(z)$
 in time slot $t$. This fixes the transmission
probability $p_z^{(u(z))}(t)$ and power $P_z^{(u(z))}(t)$ (where we have included the index $u$ to make the dependence on the MU explicit). Then, taking the expectation with respect to $ \1_z(t)$, we have
\begin{align} \nn \lefteqn{ \bbE\left\{\exp\left(-a \1_z(t) P_z^{(u(z))}(t)
h_z(t) \ell(z)\right) \Big| \cG_k \right\}} \\
\nn  = & (1-p_z^{(u(z))}(t))  \\\nn
&+  p_z^{(u(z))}(t)\bbE\left\{ \exp\left( -
a P_z^{(u(z))}(t) h_z(t) \ell(z)\right) \Big| \cG_k \right\} \\ \label{eq:bound_indiv_int} 
 = & (1-p_z^{(u(z))}(t)) + p_z^{(u(z))}(t) \f{1}{1
+ a  \ga \ell(z) P_z^{(u(z))}(t)},
\end{align}
where the second equality follows by taking expectation with respect to the independent fading gains $h_z(t) \sim EXP(1)$.

Let $u^*(z)$ be the MU for which the right hand
expression in \eqref{eq:bound_indiv_int} is minimized, i.e., BS $z$ causes maximum interference at $\sfm$ when it is serving MU $u^*(z)$. Let
$p_z^*$, $P_z^*$ denote the corresponding transmission probability
and power, respectively for BS $z$. Denote by $\1^*_z$ an independent Bernoulli random
variable with $\bbP[\1^*_z = 1] =p_z^*$.  Define
\begin{equation}
I^*(t) = \sum_{z \in \Phi \backslash \{ \sfn(\sfm) \}}
\1^*_z P_z^* h_z(t) \ell(z). \label{eq:interference_max}
\end{equation}
Essentially $I^*(t)$ dominates the actual interference $I(t)$ seen at $\sfm$. 
Substituting $I^*(t)$ for $I$ in \eqref{eq:E3} along with the observation that
given $\Phi, \Phi_R$, $I^*(t) \stackrel{d}{=} I^*(1)$, and we get 

$\bbP \left[ A(t) \cup B(t) \big|\cG_k \right] $
\begin{eqnarray}\nn
&\leq& 1 - p_{\sfn(\sfm)}(t)\\
 &&\exp\left(- \f{ \beta
\sfN}{c}\right) \, \bbE\left\{ \exp\left(- a I^*(1)\right) \Big| \Phi, \Phi_R \right\}.
\label{eq:E4}
\end{eqnarray}
Let
\[ \theta = \exp\left(- \f{ \beta
\sfN}{c}\right) p_{\sfn(\sfm)}(t)\, \bbE\left\{ \exp\left(- a I^*(1)\right) \Big| \Phi, \Phi_R \right\}.\]
Substituting from \eqref{eq:E4} in \eqref{eq:indep}, we get
\begin{equation}\label{eq:J}
\bbP\left[D >  k \big| \Phi, \ \Phi_R\right]  \leq  (1-\theta)^k.
\end{equation}

Then the expected delay can then be written as
\begin{eqnarray*}\nn
\bbE\{D\}& = & \sum_{k\ge 0} \bbP[D >  k] \\
\nn & = & \bbE\left\{\sum_{k\ge 0} \bbP\left[D > k \big| \Phi, \Phi_R \right]\right\} \\
\nn & \leq & \bbE\{\theta^{-1}\},
\end{eqnarray*}
where the last inequality follows from \eqref{eq:J}.
Using the Cauchy-Schwartz inequality, on the two random variables in $\theta^{-1}$, we get
$ $
\begin{eqnarray} \nn
\bbE\{D\}  &\le&  \exp\left(\beta \sfN/c\right) \left(\bbE\left\{\f{1}{ \left( \bbE\left\{ \exp\left(-a
I^*(1)\right) \big| \Phi, \Phi_R\right\}\right)^2 } \right\} \right. \\\label{eq:CS}
&& \left.\bbE\left\{p_{\sfn(\sfm)}(t)^{-2}\right\} \right)^{\f{1}{2}}.
\end{eqnarray}
From the definition of the transmission probability $p_{\sfn(\sfm)}(t) = \left(M/c\right) \ell(d_0) = \left(M/c\right) \min\{1,d_0^{-\al}\}$, we
get $\bbE[p_{\sfn(\sfm)}(t)^{-2}] $
\begin{align}\nn
&= \left(\f{c}{M} \right)^{2}\left(\int_0^1 1 f_{d_0}(x) dx + \int_1^\infty d_0^{2 \al}f_{d_0}(x) dx\right), \\\nn
& \le \left(\f{c}{M} \right)^{2}\left(1 + \int_0^\infty d_0^{2 \al}f_{d_0}(x) dx\right), \\ \label{eq:p(o)bound}
&=\left(\f{c}{M} \right)^{2} \left(1+ \frac{\Gamma(\alpha+1)}{(\pi \lambda)^\alpha}\right),
\end{align}
using the PDF of $d_0$ from Proposition \ref{prop:nnpdf}.

Now we work towards bounding $\bbE\left\{\f{1}{ \left( \bbE\left\{ \exp\left(-a I^*(1)\right) \big| \Phi, \Phi_R \right\} \right)^2
} \right\}$.
Recall, $\bbE\left\{ \exp\left(-a I^*(1)\right) \big| \Phi, \Phi_R \right\}$
\begin{equation}
  = \prod_{z \in \Phi
\backslash \{\sfn(\sfm) \}} \bbE \left\{ \exp\left(- a \1^*_z P_z^*
h_z(1) \ell(z)\right) \big| \Phi, \Phi_R  \right\}.
\label{eqn:cond_exp_int}
\end{equation}
Taking expectations, first with respect to $\1^*_z$ and then with respect to fading gain $h_z(1)$, we get $\bbE \left\{ \exp\left(- a \1^*_z P_z^* h_z(1) \ell(z)\right) | \Phi, \Phi_R
\right\} $
\begin{eqnarray} \nn
& = & (1-p^*_z) + p^*_z \bbE \left\{ e^{- a
P_z^* h_z(1) \ell(z)} \big| \Phi, \Phi_R \right\} \\ \nn
 & = & 1 - p_z^* \left( 1 - \f{1}{1 + a P_z^* \ell(z)}
\right) \\ \nn
 & \stackrel{(a)}= & 1 - \f{\beta \ga p_z^* P_z^* \ell(z)}{c + \beta \ga P_z^*
\ell(z)} \\ \label{eq:dummyyy}
 & \stackrel{(b)}\geq & 1 - \f{\beta \ga M \ell(z)}{c} = 1 - \beta \ga (1-\ep)
 \ell(z),
\end{eqnarray}
where $(a)$ follows by resubstituting $a =\f{\beta \gamma}{c} $, and $(b)$ by invoking the average power constraint  $p_z P_z \le M$ for $\forall \ z$ and in 
particular $p^*_z P^*_z \le M$ and $c = M(1-\ep)^{-1}$. Let $\kappa=\beta \ga (1-\ep)$, where note that because of assumption on 
$\ep$ ($\ep$ satisfies $(1-\epsilon)\beta \gamma < 1$)  in the power control strategy \eqref{eq:powercontrol}, $\kappa <1$. Substituting \eqref{eq:dummyyy} in (\ref{eqn:cond_exp_int}), we get
\begin{equation}
\bbE\left\{ e^{-a I^*(1)} \big| \Phi, \Phi_R \right\} \geq \prod_{z \in \Phi
\backslash \{\sfn(\sfm) \}} \left( 1 - \kappa \ell(z) \right).
\label{eqn:bound_cond_exp_int}
\end{equation}
%
Hence $\bbE\left\{\f{1}{ \left( \bbE\left\{\exp\left(-a I^*(1)\right)| \Phi,\Phi_R \right\}
\right)^2 } \right\}$\begin{equation}
  \le  \bbE\left\{ \prod_{z \in \Phi
\backslash \{\sfn(\sfm) \}) } \exp\left( - 2 \log \left( 1 - \kappa
\ell(z) \right)\right) \right\}. \label{eq:expintfbound1}
\end{equation}
Once again using the probability generating functional (Proposition \ref{prop:PGF}) for the PPP $
 \Phi
\backslash \{\sfn(\sfm) \}$, we get $\bbE\left\{\f{1}{ \left( \bbE\left\{\exp\left(-a I^*(1)\right)| \Phi \right\}
\right)^2 } \right\}$
\begin{eqnarray}\nn
  & \leq & \bbE_{d_0}\left\{\exp\left( \lambda \int_{\bbR^2 \backslash \bB(0,d_0)}
\left( \exp\left(-2 \log \left( 1 - \kappa \ell(z) \right)\right) \right.\right.\right.\\
&&\left.\left.\left.
- 1 \right) {\mathrm d}z \right)\right\}, \nn \\
 & \leq &\bbE_{d_0}\left\{ \exp \left( \f{2 \lambda \kappa}{(1-\kappa)^2} \int_{\bbR, z > d_0} z\ell(z)
{\mathrm d}z \right)\right\} , \label{eqn:finite_exp_int}
\end{eqnarray}
where $\bB(0,d_0)$ is the disc with radius $d_0$ centered at the origin, and the last inequality follows by noting that $\ell(z) \le 1$.  Similar to the proof of Theorem \ref{thm:lb}, we can separate the integral into two parts $z > d_0 <1$ (where $\ell(z) = 1$) and $z > d_0> 1$ (where $\ell(z) = z^{-\alpha}$), to get 
$\bbE\left\{\f{1}{ \left( \bbE\left\{\exp\left(-a I^*(1)\right)| \Phi \right\}
\right)^2 } \right\}$
\begin{eqnarray}\nn
 & \leq &\int_{0}^{1}  \exp \left( \f{2 \lambda \kappa}{(1-\kappa)^2} \left(\frac{1-x^2}{2}\right)\right) f_{d_0}(x) dx\\ \nn
 &&+
\int_{1}^\infty  \exp \left( \f{2 \lambda \kappa}{(1-\kappa)^2} \left( \frac{x^{2-\alpha}}{\alpha-2}\right)\right) f_{d_0}(x) dx, \label{eqn:finite_exp_int2}
\end{eqnarray}

Since $\alpha > 2$, using Proposition \ref{prop:nnpdf}, 
we get the following bound on the expectation 
\begin{eqnarray}
 \bbE\left\{\f{1}{ \left( \bbE\left\{\exp\left(-a I^*(1)\right)| \Phi \right\}
\right)^2 } \right\}& \leq &\exp \left(c_4\lambda\right),
\end{eqnarray}
where $c_3$ is a constant.
Combining this with \eqref{eq:p(o)bound}, from \eqref{eq:CS} we get 
$$\bbE\{D\} \le \sqrt{c_5 \left(1+ \frac{\Gamma(\alpha+1)}{(\pi \lambda)^\alpha}\right) \exp \left(c_4\lambda\right)},$$
where $c_4$ is a constant.
This completes the proof of Theorem~\ref{thm:ub}.

\end{document}